%% file: main.tex
\documentclass[11pt]{article}
\usepackage[T1]{fontenc}
\usepackage{lmodern}
\usepackage{amsmath,amssymb,amsthm}
\usepackage[margin=1in]{geometry}
\usepackage{booktabs,tabularx}
\usepackage{float,algpseudocode,authblk}
\floatstyle{ruled}
\newfloat{algorithm}{tbp}{loa}
\floatname{algorithm}{Algorithm}
\usepackage{microtype}
\usepackage{needspace}
\input{figures_tikz/style}
\input{figures_tikz/captions}

\usepackage[hidelinks]{hyperref}
\hypersetup{pdftitle={Online Covering with Maximum Delay under Subadditive Service Costs},
            pdfauthor={Anonymous authors}}
\newtheorem{theorem}{Theorem}[section]
\newtheorem{lemma}[theorem]{Lemma}
\newtheorem{proposition}[theorem]{Proposition}
\newtheorem{corollary}[theorem]{Corollary}
\theoremstyle{definition}
\newtheorem{definition}[theorem]{Definition}
\theoremstyle{remark}

\newcommand{\ALG}{\operatorname{ALG}}
\newcommand{\OPT}{\operatorname{OPT}}
\newcommand{\E}{\mathbb E}
\newcommand{\Prob}{\mathbb P}
\newcommand{\ind}{\mathbf 1}
\newcommand{\Rrho}{R_{\rho}}
\newcommand{\supp}{\operatorname{supp}}
\newcommand{\Span}{\operatorname{span}}
\newcommand{\Threshold}{\textnormal{\textsc{Threshold}}}
\newcommand{\Height}{\textnormal{\textsc{Height}}}
\title{Online Covering with Maximum Delay\\under Subadditive Service Costs}
\author{
Tianhang Lu\thanks{The first two authors contribute equally.},
Runtian Ren,
Shengcai Liu
}
\affil{
Guangdong Provincial Key Laboratory of Brain-Inspired Intelligent Computation,\\
Department of Computer Science and Engineering,\\
Southern University of Science and Technology, Shenzhen 518055, China\\
liusc3@sustech.edu.cn
}
\date{}

\begin{document}
\maketitle

\begin{abstract}
We study online covering in which each instantaneous service pays its
purchase cost and one maximum waiting time, with no effect on future
requests. For static realizable services, monotone subadditivity suffices
for optimal competitive ratios; submodularity is unnecessary. A normalized
monotone subadditive lower-bound oracle with realization factor $\rho$
yields ratios $\rho+1$ deterministically and $1/(1-e^{-1/\rho})$ randomly
against an oblivious adversary. Exact batch optimization gives the optimal
constants $2$ and $e/(e-1)$. The randomized algorithm uses one global
threshold on a seed-independent virtual-height trajectory, whose active
time is a lower bound on the offline optimum. Weighted vertex cover gives
a strict separation from submodularity on a three-edge bipartite path,
with polynomial-time batch implementations through min-cut and LP
rounding. An offline consecutive-batch normal form also transfers static
approximation guarantees to the offline problem.
\end{abstract}

\input{sections_subadditive/intro}
\input{sections_subadditive/model_offline}
\input{sections_subadditive/algorithms}
\input{sections_subadditive/lower_bounds}
\input{sections_subadditive/applications}

\section{Concluding Remarks}
\label{sec:conclusion}

For static services with one maximum-delay charge per joint purchase, monotone subadditivity suffices for the exact competitive constants $2$ and $e/(e-1)$. Vertex cover is both an application and a witness that the theorem goes beyond submodular batch costs. 
The offline normal form and the realizable-oracle formulation separate temporal uncertainty from the computational task of optimizing a batch.
This separation leaves a concrete computational question: can an efficient algorithm for general weighted graphs improve on $3$ deterministically or $\sqrt e/(\sqrt e-1)$ randomly? The tight examples in the appendix concern our LP-rounding implementations and leave this gap open for other efficient algorithms.

\appendix
\input{sections_subadditive/tightness}

\bibliographystyle{plain}
\bibliography{references}
\end{document}

%% file: figures_tikz/style.tex
\usepackage{tikz}
\usetikzlibrary{arrows.meta,calc,decorations.pathreplacing,patterns,positioning}
\definecolor{vcblue}{HTML}{0072B2}
\definecolor{vcorange}{HTML}{D55E00}
\definecolor{vcink}{HTML}{263B50}
\definecolor{vcmuted}{HTML}{81909B}
\definecolor{vclight}{HTML}{EDF3F6}
\tikzset{
  vc figure/.style={font=\small, line cap=round, line join=round, text=vcink},
  vc axis/.style={-{Stealth[length=4pt]}, draw=vcink, line width=.55pt},
  vc guide/.style={draw=vcmuted!45, densely dashed, line width=.4pt},
  vc vertex/.style={circle, draw=vcink, fill=white, line width=.65pt,
                    minimum size=5.5pt, inner sep=0pt},
  vc chosen/.style={vc vertex, fill=vcblue, draw=vcblue},
  vc active/.style={draw=vcink, line width=1.2pt},
  vc inactive/.style={draw=vcmuted!45, dashed, line width=.65pt},
  vc service/.style={rectangle, draw=vcink, fill=vcink,
                     minimum size=4.7pt, inner sep=0pt},
  vc arrival/.style={circle, fill=vcink, draw=vcink,
                     minimum size=3.7pt, inner sep=0pt},
  vc panel/.style={font=\small\bfseries, anchor=west},
  vc note/.style={font=\footnotesize, text=vcink},
}

%% file: figures_tikz/captions.tex
\newcommand{\CaptionSeparation}{%
  A subadditive batch cost need not be submodular, even on a bipartite graph.
  Solid edges are the requested types; dashed edges are absent from the
  queried batch. Filled vertices give minimum covers. In the union, the
  two disjoint end edges force two selected vertices. The four values
  violate the submodular inequality. All vertex costs are one.}
\newcommand{\CaptionOffline}{%
  The consecutive-batch normal form. A horizontal segment is one original
  service's waiting interval, not a request-by-request delay sum; its
  right square is the service time. Overlapping intervals belong to one
  connected component. Each component becomes a full batch, served at
  its last arrival (times $2$ and $5$ here). Subadditivity bounds the new
  purchase cost, and the batch span is bounded by the sum of the old
  interval lengths. Here $p_\ell$ and $d_\ell$ are the original purchase
  and delay costs.}
\newcommand{\CaptionHeights}{%
  One virtual trajectory, two actual schedules. Both requests have the
  same type with service cost one and arrive at $0$ and $1/2$. The first
  block pauses while the new lowest block rises; they merge at time $1$
  and height $1/2$. Hollow circles mark threshold crossings. The two
  displayed thresholds are alternative realizations of the single
  global random choice, not fresh samples at successive services.
  An actual service does not reset the heights. The virtual active time
  is $L=3/2$, independently of the threshold.}
\newcommand{\CaptionPotential}{%
  Why virtual active time is measured by a potential decrease. Older
  blocks have higher heights. Between the lowest height $h$ and the next
  block height, the layer $A_x$ contains only $B_3$. Raising $B_3$ by
  $dh$ removes the hatched strip of width $g(B_3)$ and height $dh$;
  every other layer is unchanged. At speed $1/g(B_3)$ the area decreases
  at unit rate. The plotted widths are schematic monotone layer costs,
  not an assumption of additivity or strict cost increase.}

%% file: sections_subadditive/intro.tex
\section{Introduction}
\label{sec:intro}

Online covering couples a timing decision with a static optimization
problem. Waiting allows requests to share a purchase, but incurs delay.
We study the objective in which a service at time $t$, purchasing $S$ and
serving a nonempty batch $B$ with arrival times $a_i$, costs
\[
    c(S)+\max_{i\in B}(t-a_i).
\]
The objective sums these costs over services. Purchases are instantaneous,
serve only pending requests, and have no effect on future services.

Which properties of the joint purchase cost are enough for optimal online
batching? We show that normalized, monotone subadditive costs support the
optimal competitive constants $2$ and $e/(e-1)$ under exact batch
optimization. Diminishing returns are not required. When batch
optimization is approximate, the competitive guarantees depend only on
the realization factor of a suitable static batch oracle.

Weighted vertex cover is a natural test case. Requests are edges of a
known graph, and a service buys a vertex set and clears all pending
requests incident to it. Its minimum batch cost is subadditive because
the union of two covers covers the union of their edge sets. Yet it need
not be submodular, even on a three-edge bipartite path. Vertex cover thus
provides both an application and a strict separation of the cost
assumptions.

The companion manuscript on multi-level aggregation with maximum
delay~\cite{lu2026mla} obtains the same constants for static realizable
submodular joint costs through DP-Envelope. Here, direct interval charging
and a virtual-height process establish the bounds under subadditivity
and accommodate approximate batch optimization. Section~\ref{sec:mla}
gives the formal relationship between the two frameworks.

\subsection{Results and scope}

Let $\kappa(A)$ be the minimum purchase cost of one service covering request
types $A$. The service model is static: such a service can be executed
whenever precisely these types are pending, and executing it does not
change the feasibility or cost of subsequent services. We assume that
$\kappa$ is normalized, monotone, and subadditive. An oracle supplies a
lower bound $g(A)\le\kappa(A)$ and an executable service of cost at most
$\rho g(A)$, where $\rho\ge1$ is fixed. The exact oracle has
$g=\kappa$ and $\rho=1$.

\begin{theorem}[Oracle framework; informal]
\label{thm:intro}
A normalized monotone lower-bound oracle with realization factor $\rho$
gives a strictly $(\rho+1)$-competitive deterministic algorithm.
If $g$ is also subadditive, a randomized algorithm is strictly
\[
        \Rrho:=\frac{1}{1-e^{-1/\rho}}
\]
competitive against an oblivious adversary. Both algorithms are causal,
need no notification that the input has ended, and use linearly many
batch-oracle calls in the number of arrival epochs.
\end{theorem}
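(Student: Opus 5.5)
The plan is to compare both algorithms against a lower bound on $\OPT$ obtained from a \emph{virtual-height process} driven only by the oracle values $g$. First, an offline normal form: using monotonicity and subadditivity of $\kappa$, any schedule can be turned into one that serves consecutive batches of arrival epochs at their last arrival, without increasing cost. In such a schedule the purchase cost of a batch is at least $g$ of its type set. Its delay is the span of the batch. So $\OPT$ is at least the minimum, over partitions of the arrival sequence into consecutive blocks $B$, of $\sum_B \bigl(g(B)+\operatorname{span}(B)\bigr)$.

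\textbf{Deterministic algorithm.} Keep the pending requests since the last service as one block $B$. Serve $B$ through the oracle as soon as its waiting time $t-\min_{i\in B} a_i$ reaches $g(B)$, paying at most $\rho g(B)+g(B)$. The charging argument takes each online service interval $I=[\min a_i,\,t]$ of length $g(B)$ and looks at how $\OPT$ treats the requests of $B$. If $\OPT$ serves them all in one batch, its purchase cost is at least $g(B)$ by monotonicity. Otherwise some $\OPT$ batch serves an earlier request of $B$ while waiting, and that batch's interval overlaps $I$. Online intervals are disjoint, so each $\OPT$ interval of length $\ell$ can be charged only against the online intervals it straddles, and the $\OPT$ delay covers the portion of $I$ it crosses. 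Summing gives $\sum_B g(B)\le\OPT$. Hence $\ALG\le(\rho+1)\OPT$. Strictness holds because there is no additive term, and causality holds because each decision uses only past arrivals and one oracle call per epoch.

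\textbf{Randomized algorithm.} Run a seed-independent virtual trajectory of blocks, as in the figures. Each arriving epoch starts a new lowest block at height $0$. Only the lowest block rises, at speed $1/g(\text{layer})$, and blocks merge when heights meet. Draw one global threshold $\theta\in[0,1]$ with density proportional to $e^{\theta/\rho}$ on its support, scaled so that the final ratio comes out as $\Rrho$. Whenever a virtual height crosses $\theta$ times the block's $g$-scale, serve all pending requests. The key lemma is that the total virtual active time $L$ is at most $\OPT$. I would prove it with a potential $\Phi$ equal to the area under the layer-cost profile: raising the lowest block decreases $\Phi$ at unit rate. Subadditivity of $g$ then shows that $\OPT$'s schedule (in normal form) pays at least the decrease of $\Phi$ plus the active time during its waiting intervals.

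Next, bound the expected online cost per unit of active time. The purchase cost is at most $\rho g$ per crossing. The delay is at most the active time elapsed since the last crossing. Integrating over $\theta$ with the exponential density balances these two terms, exactly as in ski-rental, and gives $\E[\ALG]\le\Rrho\, L\le\Rrho\,\OPT$. Against an oblivious adversary the trajectory, and hence $L$, is independent of $\theta$, which justifies taking the expectation this way.

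The main obstacle is the lemma $L\le\OPT$. One must show that merging blocks and pausing older blocks never lets $\OPT$ avoid paying for a potential drop. This is exactly where subadditivity (not submodularity) of $g$ must suffice. I would first try to prove it layer by layer: cut $\OPT$'s batches at the block boundaries and apply $g(X\cup Y)\le g(X)+g(Y)$ to each piece.
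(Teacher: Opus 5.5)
Your architecture matches the paper's: the same \Threshold{} rule, the same seed-independent height process with one global threshold of density proportional to $e^{\theta/\rho}$, and the same layer-area potential. However, the two load-bearing lemmas are not proved, and there are genuine gaps.

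\textbf{The lemma $L\le\OPT$.} You flag this as the main obstacle. Cutting $\OPT$'s batches at block boundaries is not what works: summing subadditivity bounds over the pieces gives $\sum g(\text{pieces})$, which can exceed the single $g(B)$ that $\OPT$ actually pays. The missing step compares each whole layer with the whole batch, and does the accounting through insertion jumps. Let $y_j$ be the jump of $P(t)=\int_0^1 g(A_x(t))\,dx$ at epoch $b_j$. Since $P$ falls at unit rate while active and is unchanged by merges and completions, $L=\sum_j y_j$. Heights never decrease, and everything inserted during $[b_j,b_k]$ lies in $B_{j:k}$, so $A_x(b_k^+)\subseteq A_x(b_j^-)\cup B_{j:k}$ for every $x$. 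Monotonicity and subadditivity of $g$, integrated over $x$, give $P(b_k^+)-P(b_j^-)\le g(B_{j:k})$. Adding back the at most $b_k-b_j$ units of decrease in between gives $\sum_{\ell=j}^k y_\ell\le g(B_{j:k})+b_k-b_j$. Summing over an optimal consecutive partition yields $L\le\OPT$. Active time lying between offline batches is paid by these jumps, not by any offline waiting interval, so your phrasing in terms of ``active time during its waiting intervals'' does not close.

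\textbf{The expected-cost step.} Your rule ``crosses $\theta$ times the block's $g$-scale'' is inconsistent with speed $1/g$. Heights lie in $[0,1]$ and must be compared with $\Theta_\rho$ itself; otherwise the target moves at merges and may never be reached. With the correct rule you still need the full-service invariant: older unfinished blocks below $\Theta_\rho$ are absorbed into the lowest block before it crosses, and older blocks above it are already served. Then the crossing block is the entire pending queue, bought at cost at most $\rho g(B)$. The queue is nonempty at an active time with lowest height $h$ exactly when $\Theta_\rho>h$. This gives purchase rate $\rho q_\rho'(h)$ and busy rate $1-q_\rho(h)$, which sum to $\Rrho$. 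Your bound ``delay at most the active time since the last crossing'' also counts time in which already-served blocks rise above the threshold, so it is too weak to give $\Rrho$.

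\textbf{The deterministic charging.} The case ``$\OPT$ serves all of $B$ in one batch'' may charge one offline purchase to several online batches. The ``otherwise'' case never says what pays for the part of $I_i=[b_i,t_i)$ not covered by offline waiting intervals. The argument must use prefixes of the online queue, not the final batch: before the service, the accumulated queue satisfies $\kappa(A_i(y))\ge g(A_i(y))>y-b_i$. Take any $y\in I_i$ outside the union $V$ of offline waiting intervals. Every request of $A_i(y)$ has then already been served offline, by services with times in $[b_i,y]$. By subadditivity of $\kappa$, their total purchase cost exceeds $y-b_i$. Charge each offline purchase to the unique online interval containing its service time, and let $Q_i$ be the amount charged to $I_i$. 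Taking $x=\sup(I_i\setminus V)$ gives $W_i\le Q_i+|V\cap I_i|$, and disjointness of the $I_i$ gives $\sum_i W_i\le\OPT$. This is where subadditivity of $\kappa$, not monotonicity alone, enters the $(\rho+1)$ bound.

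\textbf{Remaining claims of the statement.} Causality of \Threshold{} uses monotonicity of $g$: an arrival never moves the timer $b+g(A)$ into the past. You also do not address termination without an end signal, or the linear oracle count for the height process. Each phase starts at an arrival or at one of at most $n-1$ merges, so there are at most $2n-1$ phases.
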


The oracle separates the two tasks: $g$ certifies the timing decisions,
while its realization supplies the purchase. The required monotone
lower-bound function is part of the oracle contract, not a property
guaranteed by an arbitrary approximation algorithm.

For exact batch costs, both ratios are optimal on every fixed system
containing a repeatable positive-cost type, even with a fixed additive
constant. The lower bounds use just that type, and hence apply to a
single-edge graph. Against feedback-dependent arrivals, the optimal
randomized ratio is instead $2$ when compared with the expected
hindsight optimum of the realized input.

Table~\ref{tab:results} gives the principal implementations. On general
graphs, an exact weighted vertex-cover oracle is computationally expensive;
the fractional relaxation gives the stated polynomial-time bounds. The
general-graph, hypergraph, and greedy set-cover bounds are tight or
asymptotically tight for the specified implementations, rather than lower
bounds for every polynomial-time algorithm.

\begin{table}[t]
\centering\small
\begin{tabularx}{\textwidth}{@{}>{\raggedright\arraybackslash}Xccc@{}}
\toprule
Service system / batch oracle & $\rho$ & Deterministic & Randomized\\
\midrule
Static subadditive / exact & $1$ & $2$ & $e/(e-1)$\\
Weighted bipartite VC / min-cut & $1$ & $2$ & $e/(e-1)$\\
Weighted general VC / LP rounding & $2$ & $3$ & $\sqrt e/(\sqrt e-1)$\\
Rank-$r$ hypergraph VC / LP rounding & $r$ & $r+1$ & $\Rrho\ (\rho=r)$\\
Family-service Set Cover / LP + greedy & $H_m$ & $H_m+1$ & $\Rrho\ (\rho=H_m)$\\
\bottomrule
\end{tabularx}
\caption{Strict competitive guarantees against an oblivious adversary
in the randomized case. Here $R_\rho=(1-e^{-1/\rho})^{-1}$,
$m$ is the number of request types, and $H_m=\sum_{j=1}^m1/j$.
All rows except the unrestricted exact-oracle row
have polynomial-time batch implementations for explicit rational input.
Continuous-time randomized execution is in the real-arithmetic model
described in Section~\ref{sec:execution}.}
\label{tab:results}
\end{table}

Offline, subadditivity permits a normal form consisting of full services of
consecutive arrival epochs, each performed at its final arrival time. This
gives an exact interval dynamic program using quadratically many batch
evaluations. Replacing each batch optimum by a feasible
$\rho$-approximation gives a $\rho$-approximation offline, without needing
monotonicity of the approximation routine's output values. In particular,
offline weighted vertex cover with maximum delay is polynomial-time
solvable on bipartite graphs and NP-hard on general graphs.

\subsection{Algorithms and analysis}

The deterministic algorithm serves the entire pending queue when its
oldest age reaches $g(A)$. Monotonicity makes the deadline causal.
An interval-charging argument assigns each online waiting interval to
offline purchases and offline waiting time, proving that the sum of
online waiting times is at most the offline optimum.

The randomized algorithm has a different construction. It maintains a
deterministic virtual process in which consecutive blocks have heights in
$[0,1]$, ordered from high to low by arrival time. Only the lowest block
rises, at speed $1/g(B)$, merging with an older block when their heights
meet. A single random threshold determines actual full services. An
integral of layer costs decreases at unit rate whenever the virtual
process is active. Subadditivity bounds its insertion increments over any
consecutive batch by that batch's oracle cost plus its time span. Thus the
total virtual active time is a lower bound on $\OPT$. An exponential
threshold distribution bounds the expected purchase and delay cost
together by $\Rrho$ per unit of active time.

This proof uses no diminishing-returns inequality. A concrete witness to
the extra generality is the unit-cost path with consecutive edges
$e_1,e_2,e_3$. For $A=\{e_1,e_2\}$ and $B=\{e_2,e_3\}$,
\[
 \kappa(A)=\kappa(B)=\kappa(A\cap B)=1,
 \qquad \kappa(A\cup B)=2.
\]
Hence $\kappa(A)+\kappa(B)<\kappa(A\cup B)+\kappa(A\cap B)$.
The failure of submodularity occurs already on a bipartite graph, where the exact oracle is efficiently computable.
Figure~\ref{fig:separation} below also displays the four queried edge sets and minimum-cover witnesses for this separation.

\begin{figure}[htbp]
\centering
\input{figures_tikz/non_submodular}
\caption{\CaptionSeparation}
\label{fig:separation}
\end{figure}

\subsection{Related work}

Dynamic TCP acknowledgment is a basic model of the purchase--delay tradeoff.
Dooly et al.~\cite{dooly2001} analyze both request-additive
delay and the sum of per-batch maximum delays, obtaining the optimal
deterministic factor $2$ in the latter one-type setting. Karlin et al.~\cite{karlin2003} develop optimal randomized acknowledgment
algorithms for the classical request-additive objective.
Albers and Bals~\cite{albers2005} study acknowledgment cost plus a single
global maximum delay. Here we instead sum one maximum over each service.
Bhore et al.~\cite{bhore2026} study more general
batch-aware and batch-oblivious acknowledgment delays. Their general
formulation allows a broader range of delay functions; here we fix the
per-service maximum and study the joint covering cost.

In Multi-level Aggregation (MLA) with delay, a rooted connected subtree serves requests located at its vertices. 
The classical request-additive model was studied, among others, by Bienkowski et al.~\cite{bienkowski2020}. 
Broader frameworks for optimization with delay and deadlines include the work of Azar and Touitou~\cite{azar2019}. 
The closest maximum-delay comparison is the companion submodular framework discussed above and in Section~\ref{sec:mla}.

Set Cover with Delay, studied by Azar et al.~\cite{azar2020}, charges delay request by request, and also includes a $3$-competitive result for vertex cover in that model. 
Our objective instead charges one maximum per joint service. 
The set-cover corollary allows an entire family of sets to be bought as one service; it does not
cover atomic purchases with separate maximum-delay charges (Section~\ref{sec:boundary}).

The spatial implementations use standard covering relaxations,
min-cut duality, and greedy set cover~\cite{williamson2011,schrijver2003,chvatal1979}.
Their batch values and feasible realizations supply the oracle contract.

\paragraph{Organization.}
Section~\ref{sec:model} specifies the service and oracle contracts.
Section~\ref{sec:offline} proves the offline normal form and dynamic program.
Sections~\ref{sec:det} and~\ref{sec:random} give the online algorithms.
Section~\ref{sec:lower} proves the matching lower bounds.
Sections~\ref{sec:applications} and~\ref{sec:mla} discuss implementations and
the relation to MLA. Section~\ref{sec:boundary} describes the service-model
boundary. Appendix~\ref{sec:tight} gives implementation-specific tightness
constructions.

%% file: figures_tikz/non_submodular.tex
\begin{tikzpicture}[vc figure, x=1cm, y=1cm]
  \node[vc note, anchor=west] at (0,1.18)
    {The same unit-cost path in every panel};
  \node[vc chosen] at (9.9,1.18) {};
  \node[vc note, anchor=west] at (10.05,1.18) {selected cover vertex};

  \begin{scope}[xshift=0cm]
    \node[vc panel] at (0,.62) {(a) $A=\{e_1,e_2\}$};
    \draw[vc active] (0,0)--(1,0)--(2,0);
    \draw[vc inactive] (2,0)--(3,0);
    \foreach \i in {0,2,3} \node[vc vertex] at (\i,0) {};
    \node[vc chosen] at (1,0) {};
    \foreach \i in {0,1,2,3} \node[vc note, below=4pt] at (\i,0) {$v_{\i}$};
    \foreach \i/\j in {.5/1,1.5/2,2.5/3}
      \node[vc note, above=3pt] at (\i,0) {$e_{\j}$};
    \node at (1.5,-.82) {$\kappa(A)=1$};
  \end{scope}
  \begin{scope}[xshift=3.8cm]
    \node[vc panel] at (0,.62) {(b) $B=\{e_2,e_3\}$};
    \draw[vc inactive] (0,0)--(1,0);
    \draw[vc active] (1,0)--(2,0)--(3,0);
    \foreach \i in {0,1,3} \node[vc vertex] at (\i,0) {};
    \node[vc chosen] at (2,0) {};
    \foreach \i in {0,1,2,3} \node[vc note, below=4pt] at (\i,0) {$v_{\i}$};
    \foreach \i/\j in {.5/1,1.5/2,2.5/3}
      \node[vc note, above=3pt] at (\i,0) {$e_{\j}$};
    \node at (1.5,-.82) {$\kappa(B)=1$};
  \end{scope}
  \begin{scope}[xshift=7.6cm]
    \node[vc panel] at (0,.62) {(c) $A\cup B$};
    \draw[vc active] (0,0)--(1,0)--(2,0)--(3,0);
    \foreach \i in {0,3} \node[vc vertex] at (\i,0) {};
    \foreach \i in {1,2} \node[vc chosen] at (\i,0) {};
    \foreach \i in {0,1,2,3} \node[vc note, below=4pt] at (\i,0) {$v_{\i}$};
    \foreach \i/\j in {.5/1,1.5/2,2.5/3}
      \node[vc note, above=3pt] at (\i,0) {$e_{\j}$};
    \node at (1.5,-.82) {$\kappa(A\cup B)=2$};
  \end{scope}
  \begin{scope}[xshift=11.4cm]
    \node[vc panel] at (0,.62) {(d) $A\cap B$};
    \draw[vc inactive] (0,0)--(1,0) (2,0)--(3,0);
    \draw[vc active] (1,0)--(2,0);
    \foreach \i in {0,2,3} \node[vc vertex] at (\i,0) {};
    \node[vc chosen] at (1,0) {};
    \foreach \i in {0,1,2,3} \node[vc note, below=4pt] at (\i,0) {$v_{\i}$};
    \foreach \i/\j in {.5/1,1.5/2,2.5/3}
      \node[vc note, above=3pt] at (\i,0) {$e_{\j}$};
    \node at (1.5,-.82) {$\kappa(A\cap B)=1$};
  \end{scope}

  \node[fill=vclight, rounded corners=2pt, inner xsep=12pt, inner ysep=6pt]
    at (7.2,-2.15)
    {$\underbrace{\kappa(A)+\kappa(B)}_{1+1=2}
      \;<\;\underbrace{\kappa(A\cup B)+\kappa(A\cap B)}_{2+1=3}$};
\end{tikzpicture}

%% file: sections_subadditive/model_offline.tex
\section{Static Services and Batch Oracles}
\label{sec:model}

\subsection{Service model}

Let $U$ denote a finite known universe of request types. 
A finite static family of actions is given; each action $S$ has a fixed nonnegative cost $c(S)$ and a fixed subset of types that it covers. 
Every type is coverable. 
An input consists of finitely many occurrences $(a_i,u_i)$, with $a_i\ge0$ and $u_i\in U$. 
Types may repeat. 
An action at time $t$ instantaneously serves every pending occurrence of each type it covers. 
If the nonempty set of occurrences served is $B$, it costs
\begin{equation}
    c(S)+t-\min_{i\in B}a_i.
\label{eq:servicecost}
\end{equation}
Actions serving no requests can be omitted. 
Every request must eventually be served in finite time. 
All arrivals at a timestamp are revealed before services at that timestamp; multiple instantaneous services are allowed.
There is no persistent acquisition, action duration, capacity constraint, time-dependent eligibility, or effect on the cost of subsequent actions.

For $A\subseteq U$, let $\kappa(A)$ denote the minimum cost of a single action covering all of $A$, and set $\kappa(\varnothing)=0$. 
We require that this minimum is finite and attained for every nonempty $A$. 
In particular, an action attaining $\kappa(A)$ can be executed whenever the pending types are precisely $A$. 
Every action serving occurrences of types $A$ has purchase cost at least $\kappa(A)$. 
Monotonicity follows from this definition; the substantive spatial assumption is
\begin{equation}
 \kappa(A\cup B)\le\kappa(A)+\kappa(B)
       \qquad(A,B\subseteq U).
 \label{eq:kappasub}
\end{equation}
A sufficient way to ensure~\eqref{eq:kappasub} is to allow two actions to be combined into one action that covers their union, at purchase cost no greater than the sum. 
The proofs need the scalar inequality and attainability, not a particular union operation on representations.

For occurrence sets we write $\kappa(B)$ as shorthand for
$\kappa(\supp(B))$, where $\supp(B)$ is the set of their types. 
We use the same convention for other batch functions. 
Thus repeated requests matter for arrivals and waiting, but not for a batch's purchase cost. 
An offline schedule is allowed to serve only part of the pending queue. 
Our online algorithms use \emph{full services}, which clear the entire queue.

Zero-cost types can be removed before applying the algorithms. 
Let $Z=\{u:\kappa(\{u\})=0\}$. 
Monotonicity and subadditivity give $\kappa(A\cup Z)=\kappa(A)$ for every $A$. 
Serve each arriving cohort of zero-cost types immediately using a zero-cost joint action. 
This adds no cost. 
Conversely, deleting these requests cannot increase the optimum.
The two instances have the same optimum, and the oracle on the remaining types is simply its restriction. 
Henceforth $\kappa(A)>0$ for every nonempty batch, unless the input is empty, which has cost zero.

Let $b_1 < \cdots < b_n$ denote the distinct remaining arrival times and $C_j$ the nonempty cohort arriving at $b_j$. 
We use $n$ for the number of arrival epochs and $N$ for the number of occurrences. Define
\[
 B_{j:k} = C_j\cup\cdots\cup C_k,
 \qquad \Span(B_{j:k}) = b_k-b_j.
\]
Unions of cohorts here are unions of occurrences, not identifications of repeated types. 
A partition into consecutive batches always cuts between complete epochs, never within simultaneous arrivals.

\subsection{Oracle and competitive guarantees}

\begin{definition}[Realizable lower-bound oracle]
\label{def:oracle}
For a fixed $\rho\ge1$, an oracle consists of a deterministic function $g:2^U\to\mathbb R_{\ge0}$ and a deterministic realization $S_g(A)$ such that $g(\varnothing)=0$ and, for every nonempty $A$,
\begin{equation}
 0<g(A)\le\kappa(A)\le c(S_g(A))\le\rho g(A).
 \label{eq:oracle}
\end{equation}
The action $S_g(A)$ covers $A$. We require $g$ to be monotone, and for randomized algorithms additionally require $g(A\cup B)\le g(A)+g(B)$. 
Ties within a batch routine are resolved by a fixed rule.
\end{definition}

The lower bound and the realization have different roles: $g$ controls time and certifies comparison with $\OPT$, whereas $S_g$ is the action actually purchased. 
An arbitrary approximate-cover output need not have a monotone or subadditive value. 
Fractional optimum values, followed by rounding, provide useful examples satisfying the full contract.

For a fixed finite input $\sigma$, let $\OPT(\sigma)$ denote its minimum offline cost. 
A strict deterministic ratio $R$ means
$\ALG(\sigma)\le R\OPT(\sigma)$ for every $\sigma$. 
For an oblivious adversary the input is fixed independently of the random seed, and the
randomized guarantee is $\E[\ALG(\sigma)]\le R\OPT(\sigma)$. 
Lower bounds below also exclude guarantees with an arbitrary fixed additive constant. Only Section~\ref{sec:adaptive} considers feedback-dependent inputs.

\section{Offline Structure and Computation}
\label{sec:offline}

\begin{lemma}[Consecutive-batch normal form]
\label{lem:normal}
Every feasible schedule can be replaced, without increasing its cost, by a sequence of full services. 
Their batches form a partition of the arrival epochs into consecutive intervals, and a batch $B_{j:k}$ is served at time $b_k$ at cost $\kappa(B_{j:k})+b_k-b_j$.
\end{lemma}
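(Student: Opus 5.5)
The construction is the one in the caption of the offline figure. Take a feasible schedule with services $\ell=1,\dots,m$. Service $\ell$ happens at time $t_\ell$, buys an action of cost $p_\ell\ge\kappa(\supp B_\ell)$, and serves a nonempty occurrence set $B_\ell$. Its delay cost is $d_\ell=t_\ell-\min_{i\in B_\ell}a_i$. Attach to it the waiting interval $I_\ell=[\min_{i\in B_\ell}a_i,\,t_\ell]$, whose length is $d_\ell$. Every arrival epoch $b_j$ lies in the interval of whichever service serves its occurrences, so the intervals cover all epochs. Form the \emph{overlap graph}, where two services are adjacent when their closed intervals intersect, and take its connected components. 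The union of the intervals in one component is a single closed interval $J$. Distinct components give disjoint unions.

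For each component, collect all occurrences served by its services. The first step is to check that these are exactly the cohorts arriving in $J$, which form a consecutive block $B_{j:k}$.
\begin{itemize}
\item Every occurrence served in the component arrives inside its own service's interval, so it lies in $J$.
\item Conversely, an occurrence arriving at some $b\in J$ is served by some service whose interval contains $b$. That interval meets $J$, so the service belongs to this component, because different components have disjoint unions.
\end{itemize}
Hence the components partition the epochs into consecutive intervals $B_{j:k}$, and simultaneous arrivals are never split. Replace each component by one full service of $B_{j:k}$ at time $b_k$.

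This replacement is feasible. Every occurrence of $B_{j:k}$ has arrived by $b_k$. Since the replacements happen in increasing order of components, an occurrence of $B_{j:k}$ is still pending at $b_k$ exactly when it is in this batch. The action attaining $\kappa(B_{j:k})$ covers precisely these types, so it can be executed there by the static assumption. If it also covers types of later pending occurrences, those are not pending yet. Earlier ones were already cleared. So the served set is exactly $B_{j:k}$.

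The cost comparison has two parts.
\begin{itemize}
\item \emph{Purchase.} By monotonicity and repeated subadditivity~\eqref{eq:kappasub}, $\kappa(B_{j:k})\le\sum_{\ell}\kappa(\supp B_\ell)\le\sum_\ell p_\ell$, summing over the services of the component.
\item \emph{Delay.} The new delay is $b_k-b_j$. Here $b_j$ and $b_k$ both lie in $J$, so $b_k-b_j\le|J|$. Since the component's intervals are connected and cover $J$, $|J|\le\sum_\ell d_\ell$.
\end{itemize}
Summing over components, the new schedule costs at most the original. Components with no arrivals do not occur, because each service serves a nonempty set.

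The main obstacle is the one subtle point: why $J$ contains exactly the arrivals of the component. The argument is that every interval contains all arrival times of its own served occurrences, since they all lie between the oldest arrival and $t_\ell$. I expect this and the feasibility check to be the main places needing care. The rest is routine.
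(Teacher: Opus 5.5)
Your proof is correct and follows essentially the same route as the paper: you group the services by connected components of the union of their closed waiting intervals (your overlap-graph components are exactly these), observe that each component serves precisely the complete epochs arriving in its interval, and replace it by one minimum-cost full service at its last arrival, bounding the purchase by repeated subadditivity and the span by the total length of the component's intervals. Your feasibility check is a bit more explicit than the paper's; the phrase ``covers precisely these types'' should read ``covers at least these types,'' which your next sentence already handles.
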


\begin{proof}
Suppose there are at most $N$ non-empty services, each serving a previously unserved occurrence. 
A service at time $t_\ell$ serving $D_\ell$ defines the closed waiting interval
\[
 J_\ell=[\min_{i\in D_\ell}a_i,t_\ell].
\]
Its length is exactly that service's delay charge. 
Group the services by connected components of $\bigcup_\ell J_\ell$, including singleton components. 
Consider one component $[b,d]$ and let $B$ denote all occurrences served by its services. 
Its earliest arrival is $b$. 
Every occurrence arriving in $[b,d]$ belongs to this component: its own service interval contains its arrival and hence intersects $[b,d]$. 
Thus the components partition whole arrival epochs into consecutive groups. 
In particular, simultaneous occurrences cannot lie in different groups.
Let $r$ denote the last arrival in $B$. 
Since the intervals cover $[b,d]$,
\[
 r-b\le d-b\le\sum_{\ell\text{ in component}}|J_\ell|.
\]
Repeated subadditivity and the lower bound on every actual purchase give
\[
 \kappa(B)\le\sum_{\ell\text{ in component}}\kappa(D_\ell) \le\sum_{\ell\text{ in component}}c(S_\ell).
\]
Replace the component's services by one minimum-cost full service at $r$.
Its cost is no greater than the original component cost. Components are temporally disjoint, so performing these replacements in order is feasible: exactly the current component's occurrences are pending at its replacement time. 
This proves the assertion, including zero-length components.
\end{proof}

\begin{figure}[htbp]
\centering
\input{figures_tikz/offline_normal_form}
\caption{\CaptionOffline}
\label{fig:offline}
\end{figure}
Figure~\ref{fig:offline} illustrates both operations in the proof: combining services within a waiting-interval component and advancing the replacement service to its final arrival.

Conversely, every partition into consecutive complete epochs can be executed by these minimum-cost actions, because the next epoch has not arrived when a batch is served. Consequently
\begin{equation}
 \OPT=\min_{\mathcal P}\sum_{B\in\mathcal P} \bigl(\kappa(B)+\Span(B)\bigr),
 \label{eq:normal}
\end{equation}
where $\mathcal P$ ranges over consecutive-epoch partitions. 
In particular, a minimum exists; no compactness argument about arbitrary service times is needed.

\begin{theorem}[Offline dynamic program]
\label{thm:offline}
Set $F(0)=0$ and, for $1\le k\le n$, set
\begin{equation}
 F(k)=\min_{1\le j\le k}
       \left\{F(j-1)+\kappa(B_{j:k})+b_k-b_j\right\}.
 \label{eq:dp}
\end{equation}
Then $F(n)=\OPT$. The recurrence requires $O(n^2)$ batch evaluations and arithmetic comparisons, apart from representing the queried batches.
An optimal schedule is recovered by predecessor pointers and batch realizations.
\end{theorem}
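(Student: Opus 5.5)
The plan is to derive the claim from the normal-form identity~\eqref{eq:normal}, which follows from Lemma~\ref{lem:normal} together with the remark that every consecutive-epoch partition is executable. First I will prove by induction on $k$ that
\[
 F(k)=\min_{\mathcal P_k}\sum_{B\in\mathcal P_k}\bigl(\kappa(B)+\Span(B)\bigr),
\]
where $\mathcal P_k$ ranges over partitions of the epochs $1,\dots,k$ into consecutive intervals. For $k=0$ the only partition is the empty one, so the value is $0=F(0)$. For $k\ge1$, every such partition has a last block of the form $B_{j:k}$ for a unique $j\in\{1,\dots,k\}$, and the remaining blocks form an arbitrary consecutive partition of epochs $1,\dots,j-1$. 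Because the objective is additive over blocks, minimizing over partitions splits into a minimum over $j$ of the best prefix value plus $\kappa(B_{j:k})+b_k-b_j$. By the induction hypothesis, the best prefix value is $F(j-1)$, and this gives exactly~\eqref{eq:dp}. Taking $k=n$ and applying~\eqref{eq:normal} gives $F(n)=\OPT$.

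For the complexity bound, computing $F(k)$ takes $k$ candidate evaluations, each needing one value $\kappa(B_{j:k})$, one addition and one comparison. Summing over $k$ gives $\sum_k k=n(n+1)/2=O(n^2)$ batch evaluations and comparisons. The earlier values $F(j-1)$ are stored, so no further work is needed. The batch $B_{j:k}$ can be represented through its support $\supp(B_{j:k})$, built incrementally as $j$ decreases; this representation cost is excluded from the statement.

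For recovery, I will record for each $k$ an index $j^*(k)$ that attains the minimum. Following the pointers back from $n$ yields a consecutive partition whose cost equals $F(n)$. Each block $B_{j:k}$ is then executed at time $b_k$ by an action attaining $\kappa(\supp B_{j:k})$, which exists because the minimum is assumed attained. As noted after Lemma~\ref{lem:normal}, this is feasible: when the block is served, exactly its occurrences are pending, since the earlier blocks have been cleared and epoch $k+1$ has not yet arrived. The schedule's cost is therefore $\sum(\kappa(B)+\Span(B))=F(n)=\OPT$.

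The argument is routine. The one point needing care is the inductive step, where I must confirm that the decomposition into a last block and a prefix partition is a bijection. This holds because partitions cut only between complete epochs, which is the convention fixed in the model.
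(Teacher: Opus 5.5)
Your proposal is correct and matches the paper's proof: both start from the normal-form identity \eqref{eq:normal}, split off the unique final block $B_{j:k}$, and use induction on $k$ for the value and for reconstruction from predecessor pointers. Your explicit count of $n(n+1)/2$ evaluations and your feasibility check for executing each recovered block at $b_k$ spell out details the paper leaves implicit.
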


\begin{proof}
In~\eqref{eq:normal}, the final batch of a prefix ending at epoch $k$ is $B_{j:k}$ for a unique $j$. 
Its preceding batches optimally partition the first $j-1$ epochs. 
Minimizing over $j$ gives~\eqref{eq:dp}; induction on $k$ proves the value and reconstruction claims.
\end{proof}

\begin{proposition}[Offline approximation transfer]
\label{prop:offlineapprox}
Suppose a batch routine returns a feasible action of value $\widehat c(B)$ with $\kappa(B)\le\widehat c(B)\le\rho\kappa(B)$. 
Substituting $\widehat c$ for $\kappa$ in~\eqref{eq:dp} yields a feasible offline $\rho$-approximation. 
The output-value function $\widehat c$ need not be monotone or subadditive.
\end{proposition}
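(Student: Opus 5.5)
We run the recurrence~\eqref{eq:dp} with $\widehat c$ in place of $\kappa$. Call the resulting values $\widehat F(k)$. We then argue in two directions: the schedule it outputs is feasible and costs exactly $\widehat F(n)$, and $\widehat F(n)\le\rho\,\OPT$. No structural property of $\widehat c$ is ever used, apart from feasibility and the two-sided bound. In particular, monotonicity and subadditivity of $\widehat c$ are never invoked. The one place subadditivity enters is through Lemma~\ref{lem:normal}, and that lemma concerns $\kappa$, not $\widehat c$.

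\emph{Feasibility and cost.} The predecessor pointers of the modified recurrence determine a partition $\widehat{\mathcal P}$ of the epochs into consecutive batches $B_{j:k}$. For each such batch we execute the routine's returned action at time $b_k$. As noted after Lemma~\ref{lem:normal}, serving consecutive complete epochs at their last arrival is executable. At time $b_k$, the pending occurrences are exactly those of $B_{j:k}$, since earlier batches have been cleared and epoch $k+1$ has not yet arrived. The action covers $\supp(B_{j:k})$, so it serves the whole batch. Its cost is $\widehat c(B_{j:k})+b_k-b_j$, because the oldest pending arrival is $b_j$. Summing over the batches and inducting along the pointers shows that the schedule's cost equals $\widehat F(n)$.

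\emph{Approximation.} Let $\mathcal P^*$ be a partition attaining~\eqref{eq:normal}; one exists by Lemma~\ref{lem:normal} and the remark following it. By induction on $k$, $\widehat F(k)$ is the minimum over consecutive partitions of the first $k$ epochs of $\sum_B(\widehat c(B)+\Span(B))$; this is the same argument as in Theorem~\ref{thm:offline}, which is purely combinatorial. Hence
\[
\widehat F(n)\le\sum_{B\in\mathcal P^*}\bigl(\widehat c(B)+\Span(B)\bigr)\le\sum_{B\in\mathcal P^*}\bigl(\rho\kappa(B)+\Span(B)\bigr)\le\rho\,\OPT,
\]
where the last step uses $\rho\ge1$. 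The lower bound $\kappa\le\widehat c$ is not needed for this inequality. The final cost is at least $\OPT$ simply because the output schedule is feasible.

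I expect no real obstacle. The only subtle point is to make sure that the comparison is made against $\mathcal P^*$, the partition that is optimal for $\kappa$, and that the DP's minimality over all partitions is stated for the objective built from $\widehat c$. That is what allows $\widehat c$ to be arbitrary: we only evaluate it on the fixed batches of $\mathcal P^*$ and on the batches actually chosen. We also do not need the DP's chosen batches to be optimal for $\kappa$.
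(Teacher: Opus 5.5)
Your proposal is correct and follows essentially the same route as the paper: you compare the modified DP value against the $\kappa$-optimal partition $\mathcal P^*$ from~\eqref{eq:normal} using $\widehat c\le\rho\kappa$, and you execute the returned actions at each batch's final arrival, where complete-epoch batches make the services feasible with cost exactly the modified value. Beyond the paper's argument, you also spell out the exact pending queue at $b_k$ and the use of $\rho\ge1$ on the span terms, which the paper leaves implicit.
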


\begin{proof}
Take an optimal partition $\mathcal P^*$ from~\eqref{eq:normal}. Its modified value is at most
\[
 \sum_{B\in\mathcal P^*}\left(\rho\kappa(B)+\Span(B)\right)
 \le\rho\OPT.
\]
The modified dynamic program chooses a partition of no larger modified value. 
Execute its returned actions at the final arrival of each batch.
Each batch comprises complete epochs, so the resulting services are feasible and their cost equals the modified value. 
This last statement would not hold for an arbitrary partition cutting simultaneous arrivals; the epoch convention is part of the recurrence.
\end{proof}

\begin{proposition}[Complexity on graphs]
\label{prop:offlinecomplexity}
Offline weighted vertex cover with maximum delay is NP-hard on general graphs, even when all requests arrive at time zero and all vertex costs are one. 
It is polynomial-time solvable on bipartite graphs with rational costs and arrival times, and has a polynomial-time $2$-approximation on general graphs.
\end{proposition}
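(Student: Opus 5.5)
The proposition has three parts, and I would handle them in this order: NP-hardness, the bipartite case, and the general $2$-approximation.

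\emph{NP-hardness.} I would reduce from unweighted minimum vertex cover. Given a graph $G$, make each edge one request type and let all requests arrive at time zero, with unit vertex costs. By~\eqref{eq:normal}, all requests share a single epoch, so every consecutive-epoch partition consists of exactly one batch $B_{1:1}$. That batch has span zero. Hence $\OPT=\kappa(E)$, the minimum vertex cover size of $G$. Deciding whether $\OPT\le k$ is therefore NP-hard. The only point to check is that the reduction is valid in the model. Every edge type is coverable, and there are no zero-cost types, since all vertex costs are one. The service cost of a cover $S$ at time $0$ is $|S|$ plus zero delay.

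\emph{Bipartite graphs.} I would apply Theorem~\ref{thm:offline}. It needs $O(n^2)$ evaluations of $\kappa(B_{j:k})$, and each evaluation is a minimum-weight vertex cover of the bipartite subgraph induced by the edge types in $\supp(B_{j:k})$. Each such cover can be computed in polynomial time. The standard route is a min-cut: connect a source to the left vertices with capacity equal to their weights, connect the right vertices to a sink likewise, and give the edges of the subgraph infinite capacity. By K\H{o}nig--Egerv\'ary and LP integrality, a minimum cut yields an optimal cover together with its realization. With rational data, every step of the min-cut and the dynamic program is polynomial in the input size, and $n\le N$ is polynomial. Predecessor pointers then recover the schedule, as stated in Theorem~\ref{thm:offline}.

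\emph{General graphs.} I would use Proposition~\ref{prop:offlineapprox} with $\rho=2$. For each batch, the routine solves the vertex-cover LP on the batch's edge types. It then rounds by taking every vertex with $x_v\ge1/2$. The output is feasible and has cost $\widehat c(B)\le 2\,\mathrm{LP}(B)\le 2\kappa(B)$, and feasibility also gives $\widehat c(B)\ge\kappa(B)$. Proposition~\ref{prop:offlineapprox} does not require $\widehat c$ to be monotone, so the transfer applies directly. Alternatively, one could use the primal--dual or local-ratio $2$-approximation, which avoids solving an LP.

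\emph{Main obstacle.} I expect the hardest step to be bookkeeping rather than substance. It is confirming polynomial running time in the bit model for rational arrival times and costs: the LP solutions and the min-cuts have polynomial size, and the dynamic program compares sums of polynomially many rationals. I would cite the standard results~\cite{schrijver2003,williamson2011} for the LP and min-cut steps and otherwise not elaborate.
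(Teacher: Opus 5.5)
Your proposal is correct and follows the paper's proof in all three parts: the time-zero reduction from unweighted vertex cover, min-cut batch evaluations inside the dynamic program of Theorem~\ref{thm:offline}, and LP rounding at $1/2$ combined with Proposition~\ref{prop:offlineapprox}. The only minor difference is how you get $\OPT=\kappa(E)$: you read it off the single-epoch case of~\eqref{eq:normal}, whereas the paper argues directly that the union of vertices bought by any feasible schedule is a cover of no greater cost, which avoids the normal form; both arguments are valid.
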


\begin{proof}
Given a graph, request each edge at time zero. A minimum vertex cover served at time zero has cost equal to the static optimum. 
Conversely, the union of vertices purchased by any feasible schedule covers every edge, and its weight is at most that schedule's total purchase cost, hence at most its total cost. 
The two optima are therefore equal, proving the reduction from static Vertex Cover~\cite{williamson2011}. 
Only the replacement service at time zero is asserted to have zero delay; an arbitrary original schedule need not have zero delay.

On bipartite graphs each $\kappa(B_{j:k})$ is a polynomial-time minimum-cut computation, as detailed in Section~\ref{sec:vc}. 
Apply Theorem~\ref{thm:offline}. 
On general graphs use the standard factor-two fractional-cover rounding and Proposition~\ref{prop:offlineapprox}.
\end{proof}

%% file: figures_tikz/offline_normal_form.tex
\begin{tikzpicture}[vc figure, x=1.7cm, y=1cm]
  \node[vc panel] at (-1.5,3.62) {(a) Original waiting intervals};
  \fill[vcblue!5] (0,.66) rectangle (3,3.18);
  \fill[vcorange!6] (4,.66) rectangle (6,3.18);
  \node[vc note] at (1.5,3.0) {component $[0, 3]$};
  \node[vc note] at (5,3.0) {component $[4, 6]$};
  \foreach \x in {0,1,2,3,4,5,6} \draw[vc guide] (\x,.7)--(\x,2.73);

  \node[vc note,anchor=east] at (-.14,2.35) {$D_1=\{r_2\}$};
  \draw[vcblue,line width=1.4pt] (1,2.35)--(2.5,2.35);
  \node[vc arrival,fill=vcblue,draw=vcblue] at (1,2.35) {};
  \node[vc service,fill=vcblue,draw=vcblue] at (2.5,2.35) {};
  \node[vc note,above=2pt] at (1.75,2.35) {$d_1=3/2$};

  \node[vc note,anchor=east] at (-.14,1.68) {$D_2=\{r_1,r_3\}$};
  \draw[vcblue,line width=1.4pt] (0,1.68)--(3,1.68);
  \node[vc arrival,fill=vcblue,draw=vcblue] at (0,1.68) {};
  \node[vc service,fill=vcblue,draw=vcblue] at (3,1.68) {};
  \node[vc note,above=2pt] at (1.5,1.68) {$d_2=3$};

  \node[vc note,anchor=east] at (-.14,1.01) {$D_3=\{r_4,r_5\}$};
  \draw[vcorange,line width=1.4pt] (4,1.01)--(6,1.01);
  \node[vc arrival,fill=vcorange,draw=vcorange] at (4,1.01) {};
  \node[vc service,fill=vcorange,draw=vcorange] at (6,1.01) {};
  \node[vc note,above=2pt] at (5,1.01) {$d_3=2$};

  \draw[vc axis] (-.12,.2)--(6.38,.2) node[right,vc note] {$t$};
  \node[vc note,anchor=east] at (-.14,.2) {arrivals};
  \foreach \x/\r in {0/1,1/2,2/3,4/4,5/5} {
    \node[vc arrival] at (\x,.2) {};
    \node[vc note,above=4pt] at (\x,.2) {$r_{\r}$};
  }
  \foreach \x in {0,1,2,3,4,5,6}
    \node[vc note,below=4pt] at (\x,.2) {$\x$};

  \node[vc panel] at (-1.5,-.85) {(b) Full batches served at their last arrivals};
  \begin{scope}[yshift=-.65cm]
  \fill[vcblue!12] (0,-1.58) rectangle (2,-1.25);
  \fill[vcorange!15] (4,-1.58) rectangle (5,-1.25);
  \draw[vc axis] (-.12,-1.58)--(6.38,-1.58) node[right,vc note] {$t$};
  \draw[vcblue,line width=1pt] (0,-1.25)--(2,-1.25);
  \draw[vcorange,line width=1pt] (4,-1.25)--(5,-1.25);
  \foreach \x/\r in {0/1,1/2,2/3,4/4,5/5}
    \node[vc arrival] at (\x,-1.58) {};
  \node[vc service,fill=vcblue,draw=vcblue] at (2,-1.25) {};
  \node[vc service,fill=vcorange,draw=vcorange] at (5,-1.25) {};
  \foreach \x in {0,1,2,3,4,5,6}
    \node[vc note,below=4pt] at (\x,-1.58) {$\x$};
  \node[vc note,above=3pt] at (1,-1.25) {$B_{1:3}=\{r_1,r_2,r_3\}$};
  \node[vc note,above=3pt] at (4.5,-1.25) {$B_{4:5}=\{r_4,r_5\}$};
  \node[vc note,align=center] at (1,-2.38)
    {$\kappa(B_{1:3})\le p_1+p_2$\\$\mathrm{span}(B_{1:3})=2\le d_1+d_2$};
  \node[vc note,align=center] at (4.5,-2.38)
    {$\kappa(B_{4:5})\le p_3$\\$\mathrm{span}(B_{4:5})=1\le d_3$};
  \end{scope}
  \node[vc service] at (5.2,3.64) {};
  \node[vc note,anchor=west] at (5.31,3.64) {service};
\end{tikzpicture}

%% file: sections_subadditive/algorithms.tex
\section{A Deterministic Threshold Algorithm}
\label{sec:det}

Let $A(t)$ denote the current pending occurrences. 
When it is nonempty, write $b = \min_{i\in A(t)}a_i$ and $W(t) = t-b$. 
The algorithm serves $A(t)$ using $S_g(A(t))$ as soon as
\begin{equation}
 W(t)=g(A(t)).
 \label{eq:dettrigger}
\end{equation}
An arrival is processed before testing the trigger at the same timestamp.
Between arrivals the queue does not change until a service. 
Since $g$ is monotone, an arrival cannot decrease the threshold. 
Thus a timer set to $b+g(A)$ is never moved into the past. 
If no further request arrives, the timer is finite. 
This specifies a causal algorithm without requiring an end-of-input signal.

\begin{algorithm}[htbp]
\caption{\Threshold$(g,S_g)$}
\label{alg:det}
\begin{algorithmic}[1]
\State Initialize the pending queue $A\gets\varnothing$.
\State On an arrival epoch, insert its entire cohort into $A$.
\State Whenever $A\ne\varnothing$, let $b$ denote its oldest arrival and set the timer to $b+g(A)$.
\State Process all arrivals tied with the timer and recompute it.
\State If the timer is due, execute $S_g(A)$ and set $A\gets\varnothing$.
\end{algorithmic}
\end{algorithm}

\begin{lemma}[Waiting-time certificate]
\label{lem:detcertificate}
Let $W_1,\ldots,W_s$ denote the maximum delays of the full services performed by \Threshold. Then $\sum_i W_i\le\OPT$.
\end{lemma}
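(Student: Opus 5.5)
The plan is to compare \Threshold\ with an offline schedule in the consecutive-batch normal form of Lemma~\ref{lem:normal}. By~\eqref{eq:normal}, fix an optimal partition $\mathcal P^*$ into consecutive-epoch batches. A batch $P=B_{j:k}$ is served at $b_k$ at cost $\kappa(P)+\Span(P)$, and its span interval is $[b_j,b_k]$. For the $i$-th online service at time $t_i$, write $A_i$ for the batch it serves, $b^{(i)}$ for its oldest arrival, and $I_i=[b^{(i)},t_i]$, so that $W_i=|I_i|=g(A_i)$. The goal is to charge each $W_i$ to pieces of $\OPT$ so that no piece is charged twice.

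First I record disjointness. Online services are full services, and arrivals at a timestamp are processed before the trigger is tested. Hence every request of $A_{i+1}$ arrives strictly after $t_i$. So the intervals $I_i$ are pairwise disjoint, and every arrival in $I_i$ belongs to $A_i$. Consequently, an offline batch whose span lies inside $I_i$, or which ends inside $I_i$ and starts at or after $b^{(i)}$, consists of requests of $A_i$ only.

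Next comes the charging rule, which is the main step. Fix $i$.
\begin{itemize}
\item If some offline batch meeting $A_i$ ends at a time $\ge t_i$, let $P'$ be the earliest such batch, and let $s'$ be the first arrival of $P'\cap A_i$.
\item Otherwise, set $s'=t_i$ and let $P'$ be void.
\end{itemize}
Because batches are consecutive, every request of $A_i\setminus P'$ arrives before $s'$. Moreover, these requests are exactly the pending queue just before $s'$: no service occurs in $[b^{(i)},t_i)$, and $P'$'s requests arrive at or after $s'$.

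Since the trigger~\eqref{eq:dettrigger} had not fired before $s'$, I get $s'-b^{(i)}\le g(A_i\setminus P')$. This is only needed when $A_i\setminus P'\ne\varnothing$; otherwise $s'=b^{(i)}$. In the case where $P'$ is void I use $W_i=g(A_i)$ directly. Then the oracle bound~\eqref{eq:oracle} and subadditivity~\eqref{eq:kappasub} of $\kappa$ give
\[
 s'-b^{(i)}\le\kappa(A_i\setminus P')\le\sum_{P\in\mathcal P^*:\,P\cap A_i\ne\varnothing,\ P\ne P'}\kappa(P).
\]
Every batch in this sum ends inside $I_i$, namely at a time in $[b^{(i)},t_i)$. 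For the remaining part, $t_i-s'$: the batch $P'$ starts no later than $s'$ and ends no earlier than $t_i$, so $[s',t_i]\subseteq$ the span interval of $P'$. Therefore
\[
W_i\le\sum_{P\text{ ends in }I_i}\kappa(P)+\bigl|[s',t_i]\cap\text{span of }P'\bigr|.
\]

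Finally I sum over $i$. Each offline batch ends at a single time, which lies in at most one of the disjoint $I_i$. So each $\kappa(P)$ is charged at most once. The span pieces $[s',t_i]$ lie in the disjoint intervals $I_i$, so each $\Span(P)$ is charged at most its length in total. Summing over $i$ then gives $\sum_iW_i\le\sum_{P\in\mathcal P^*}(\kappa(P)+\Span(P))=\OPT$.

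The delicate point I expect is the boundary bookkeeping, for two reasons. First, the identification of the queue just before $s'$ with $A_i\setminus P'$ must respect simultaneous arrivals and the arrivals-before-trigger convention. Second, a batch ending exactly at $t_i$ must go to $P'$, not to the $\kappa$-sum, which is why the $\kappa$-sum uses the half-open window $[b^{(i)},t_i)$. Note that only monotonicity of $g$ and subadditivity of $\kappa$ are used, not subadditivity of $g$.
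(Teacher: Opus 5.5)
Your proof is correct and follows essentially the paper's interval-charging argument. Each $W_i$ splits into a prefix and a suffix: the prefix is bounded, via the not-yet-fired trigger and subadditivity of $\kappa$, by offline purchases made inside $I_i$, and the suffix is covered by offline waiting time inside $I_i$; disjointness of the $I_i$ prevents double charging. The only difference is that you first pass to a normal-form optimum (Lemma~\ref{lem:normal}), which lets you split at the first arrival $s'$ of the unique offline batch straddling $t_i$. The paper instead charges against an arbitrary feasible schedule and splits at $x=\sup(I_i\setminus V)$, where $V$ is the union of that schedule's waiting intervals; this costs a limiting argument but needs no normal form.
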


\begin{proof}
The online waiting intervals $I_i=[b_i,t_i)$, from each queue's oldest arrival to its service time, are pairwise disjoint. 
Their lengths are $W_i$. 
At every $y\in I_i$, after processing arrivals at $y$, let $A_i(y)$ denote the requests accumulated in this interval. 
Since the service has not occurred,
\begin{equation}
 \kappa(A_i(y)) \ge g(A_i(y)) > y-b_i.
 \label{eq:pretrigger}
\end{equation}

Fix any feasible offline schedule. 
Write $p_\ell$ for its purchase cost, $s_\ell$ for its service time, and $d_\ell$ for its maximum-delay charge.
Its waiting intervals are $J_\ell=[s_\ell-d_\ell,s_\ell]$; set $V=\bigcup_\ell J_\ell$. 
For an online interval define
\[
 Q_i=\sum_{\ell:s_\ell\in I_i}p_\ell,
 \qquad T_i=|V\cap I_i|.
\]
We show $W_i\le Q_i+T_i$. 
If $I_i\subseteq V$, then $T_i=W_i$.
Otherwise put $x=\sup(I_i\setminus V)$ and take any
$y\in I_i\setminus V$. 
Every request in $A_i(y)$ was already served offline by time $y$: a later service of such a request would have a waiting interval containing $y$. 
Its offline service cannot precede $b_i$, since the request itself arrives at or after $b_i$. 
The offline services in $[b_i,y]$ therefore cover all of $A_i(y)$. 
Subadditivity and the purchase-cost lower bound imply
\[
 Q_i\ge\kappa(A_i(y))>y-b_i.
\]
Taking a sequence of such $y$ tending to $x$ gives $Q_i\ge x-b_i$.
The portion $(x,t_i)$ is contained in $V$, so $T_i\ge t_i-x$.
Together these yield the claim. 
Endpoint conventions do not affect lengths or the limit argument.

Disjointness of the $I_i$ now gives
\[
 \sum_i W_i\le\sum_i Q_i+\sum_i T_i
 \le\sum_\ell p_\ell+|V|
 \le\sum_\ell(p_\ell+d_\ell).
\]
Take the minimum over offline schedules.
\end{proof}

\begin{theorem}[Deterministic oracle bound]
\label{thm:det}
Under Definition~\ref{def:oracle}, with monotone $g$ but without requiring subadditivity of $g$, \Threshold\ is strictly $(\rho+1)$-competitive.
\end{theorem}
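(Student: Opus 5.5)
The plan is to decompose the cost of \Threshold\ into purchase cost plus delay cost. Each part is then bounded separately by a multiple of $\OPT$, using Lemma~\ref{lem:detcertificate} for the delay part and the trigger condition~\eqref{eq:dettrigger} for the purchase part.

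First I confirm that the algorithm is well defined and serves every request in finite time. The queue changes only at arrival epochs, and there are finitely many of them. By monotonicity of $g$, the timer $b+g(A)$ never moves into the past. So after the last arrival each nonempty queue is served at a finite time, and the algorithm performs finitely many full services, say $s$ of them. For service $i$ let $A_i$ denote the served batch, $b_i$ its oldest arrival and $t_i$ its time. Its cost by~\eqref{eq:servicecost} is $c(S_g(A_i))+W_i$ with $W_i=t_i-b_i$.

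At service time the trigger holds with equality, $W_i=g(A_i)$. One point needs care here: an arrival at the exact timer time is processed first and the timer recomputed. If the recomputed timer is already due, the service happens with $W_i\le g(A_i)$ rather than equality. Either way, the realization bound in~\eqref{eq:oracle} gives
\[
 c(S_g(A_i))\le\rho g(A_i).
\]
The intended inequality $g(A_i)\le W_i$, needed to turn this into a bound in terms of $W_i$, holds because the timer is never due strictly before the service. So at time $t_i$ we have $t_i\ge b_i+g(A_i)$. Combined with the service occurring as soon as the timer is due, this gives $W_i=g(A_i)$, since monotonicity means an arrival can only push the timer later, never into the past. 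Hence the purchase cost of service $i$ is at most $\rho W_i$, and the total cost of service $i$ is at most $(\rho+1)W_i$.

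Summing over $i$ and invoking Lemma~\ref{lem:detcertificate}, which gives $\sum_i W_i\le\OPT$, yields
\[
 \ALG\le(\rho+1)\sum_i W_i\le(\rho+1)\OPT,
\]
with no additive constant, so the guarantee is strict. Subadditivity of $g$ is never used: the lemma relies only on $g\le\kappa$ and subadditivity of $\kappa$, and the purchase bound uses only~\eqref{eq:oracle}. The empty input and zero-cost types are handled by the preprocessing in Section~\ref{sec:model}.

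The main obstacle is the timing bookkeeping behind $W_i=g(A_i)$ exactly. The argument must be airtight when arrivals coincide with the timer: the queue then grows and the threshold jumps. I would argue directly from the timer being monotone nondecreasing between services. The service fires at the first instant $t$ where the condition $t-b_i\ge g(A(t))$ holds after arrivals are processed. Because $g(A(t))$ is right-continuous, piecewise constant and nondecreasing while $t-b_i$ is continuous, at that first instant equality holds. This is the only step beyond direct citation of Lemma~\ref{lem:detcertificate}.
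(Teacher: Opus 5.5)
Your proposal is correct and follows the paper's proof. The trigger gives $W_i=g(A_i)$ at each service, so its cost is at most $\rho g(A_i)+W_i=(\rho+1)W_i$, and summing with Lemma~\ref{lem:detcertificate} yields the strict bound. Your first-instant argument, using that $g(A(t))$ is nondecreasing, right-continuous and piecewise constant, correctly spells out what the paper leaves implicit. Delete the stray claim that a tied arrival gives $W_i\le g(A_i)$: monotonicity forces equality there, as you yourself conclude.
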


\begin{proof}
At a service of batch $A_i$, the trigger gives $g(A_i)=W_i$. Its cost is at most $\rho g(A_i)+W_i=(\rho+1)W_i$. 
Sum this inequality and apply Lemma~\ref{lem:detcertificate}.
\end{proof}

The proof still uses subadditivity of the true joint cost $\kappa$.
The weaker oracle requirement in Theorem~\ref{thm:det} does not remove that service-model assumption. 
The algorithm makes at most one value query per arrival epoch; a value and its returned action can be cached until the next arrival or service. 
There are at most $n$ full services.

\section{A Randomized Height Algorithm}
\label{sec:random}

We now assume that $g$ is subadditive. 
Fix an input independently of the random seed. 
The algorithm maintains a virtual process that depends only on that input and on $g$. Actual services are determined by one globally sampled threshold; they never reset or otherwise alter the
virtual process.

The analysis has three steps. 
First, construct this seed-independent trajectory. 
Second, use subadditivity to certify that its total active time $L$ is at most $\OPT$. Third, choose the global threshold distribution so that the expected cost is at most $\Rrho$ per unit of virtual active time.

\subsection{Virtual blocks and threshold crossings}

Every arrived occurrence has a virtual height in $[0,1]$, initially zero.
Unfinished occurrences are grouped into consecutive blocks of equal height, ordered by arrival. 
Older blocks have strictly larger heights than newer blocks after equal-height neighbors are merged. 
At an arrival epoch, append its entire cohort as a new height-zero block and merge any equal-height neighbors. 
Between arrivals, if an unfinished block exists, only the lowest (newest) block $B$ rises, at speed
\begin{equation}
 \frac{dh}{dt}=\frac1{g(B)}.
 \label{eq:speed}
\end{equation}
When it meets the next older block, merge the two and continue using the oracle value of the union. 
At height one a block is completed and removed from the active list. 
For analysis, its occurrences retain height one. 
The process is idle only when the list is empty. 
Heights never decrease.

Sample $\Theta_\rho\in(0,1)$ once with distribution function
\begin{equation}
 q_\rho(h)=\frac{e^{h/\rho}-1}{e^{1/\rho}-1}, \qquad 0\le h\le1.
 \label{eq:cdf}
\end{equation}
Whenever the rising block crosses $\Theta_\rho$, execute $S_g(B)$.
The same threshold is used at every crossing throughout the input.
All requests in $B$ are then marked actually served, but their heights continue in the virtual process as before.

\begin{algorithm}[htbp]
\caption{\Height$(g,S_g,\rho)$}
\label{alg:height}
\begin{algorithmic}[1]
\State Sample $\Theta_\rho$ according to~\eqref{eq:cdf}; initialize an empty ordered list of virtual blocks.
\State At each arrival epoch, append its cohort at height zero; merge equal-height neighbors.
\While{the list is nonempty and no arrival intervenes}
  \State Let $B$ denote its lowest block at height $h$.
  \State Raise $B$ at speed $1/g(B)$ until the next merge, completion, arrival, or actual threshold crossing.
  \State At a crossing of $\Theta_\rho$, execute $S_g(B)$ without changing the virtual process.
  \State At equal neighboring heights, merge; at height one, remove the completed block.
\EndWhile
\end{algorithmic}
\end{algorithm}

\begin{lemma}[Full-service invariant]
\label{lem:full}
For each fixed input, with probability one, at every regular time the actually pending occurrences are precisely those with height below $\Theta_\rho$. 
At each threshold crossing the crossing block is the entire pending queue. 
Every occurrence is served exactly once by these full services.
\end{lemma}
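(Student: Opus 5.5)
We prove the invariant by induction over the finitely many event times of the virtual process: arrivals, merges, completions, and threshold crossings. The argument rests on three structural facts about the trajectory. Heights are nondecreasing in time. At any moment the block heights strictly decrease from older to newer blocks. Only the lowest block moves. First we discard a null event. The trajectory is deterministic given the input, so only finitely many heights ever occur at event times: heights of blocks at arrivals, merge heights, and the heights $0$ and $1$. Since $\Theta_\rho$ has a continuous distribution on $(0,1)$, with probability one it avoids this finite set. On this event, no threshold crossing coincides with a merge, an arrival, or a completion. In particular, the block crossing $\Theta_\rho$ does so strictly while rising, at a time when no other block has height exactly $\Theta_\rho$.

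The induction hypothesis is the claimed invariant: at each regular time, the actually pending set $P(t)$ equals $\{i \text{ arrived}: h_i(t)<\Theta_\rho\}$. It holds trivially before the first arrival. At an arrival epoch, the new cohort enters at height $0<\Theta_\rho$ and is pending, so the invariant is preserved. Merges and completions change no heights and trigger no services, so they preserve it too. During continuous rising, only the lowest block $B$ changes height. If it does not cross $\Theta_\rho$, membership in the set below the threshold is unchanged and no service occurs.

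At a crossing, the key step is to identify the crossing block $B$ with the whole pending queue. Just before the crossing, $B$ has height below $\Theta_\rho$. Every other unfinished block is older, hence strictly higher. None of these lies in $(h_B,\Theta_\rho)$: heights only increase, and such a block could only have reached that range by having been the lowest rising block earlier. But then it would have been at or below the current lowest block, contradicting the strict ordering unless it had merged. Cleaner is a direct argument. Any occurrence not in $B$ belongs to an older block whose height is strictly greater than $h_B$. By the almost-sure genericity, it is not equal to $\Theta_\rho$, and it cannot lie strictly between $h_B$ and $\Theta_\rho$. Otherwise $B$ would meet and merge with that block before reaching $\Theta_\rho$, making the crossing block contain it. Completed occurrences have height $1>\Theta_\rho$. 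Hence the set below the threshold is exactly $B$, which by the hypothesis is the pending queue. The algorithm executes $S_g(B)$, which covers $\supp(B)$ by Definition~\ref{def:oracle}, so it serves all of $B$. Afterwards the heights of $B$ exceed $\Theta_\rho$ and the pending set is empty, which restores the invariant.

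Finally, we show each occurrence is served exactly once. Every block eventually rises: if arrivals stop, the lowest block rises at the positive finite speed $1/g(B)$, by $g>0$ and finiteness. Merges only enlarge the rising block, so every occurrence reaches height $1$ in finite time and therefore crosses $\Theta_\rho$ exactly once, since heights are monotone and continuous. Each crossing serves it, and the invariant shows it was pending exactly then and never afterwards. The main obstacle is the crossing step. There one must make rigorous that no older block sits in the gap $(h_B,\Theta_\rho)$, and handle the measure-zero coincidences. I would first try to establish, by the same induction, the auxiliary invariant that between consecutive blocks no threshold lies strictly in the gap below the older one unless that block is already above it.
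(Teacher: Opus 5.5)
Your proof is correct and follows the paper's argument. You discard the null event that $\Theta_\rho$ equals one of the finitely many deterministic event heights, use that only the lowest block moves so it must merge with every older unfinished block below the threshold before crossing, and get ``exactly once'' from monotone, continuous heights and eventual completion; your induction over events is only a more formal packaging of this, and your first muddled justification and the auxiliary invariant proposed at the end are unnecessary because the direct merge argument already closes the crossing step (the paper's extra fallback on the null set concerns feasibility of the algorithm, not this almost-sure statement).
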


\begin{proof}
An arriving occurrence starts below the threshold. 
Its height subsequently increases monotonically, so it crosses at most once and is served at that
crossing. 
Before the lowest block can rise to the threshold, it merges with every older unfinished block whose height is below the threshold.
All blocks still older have already crossed and contain no pending requests. 
There are no newer blocks behind the lowest block at that instant. 
Hence its crossing clears precisely the pending queue.

The virtual evolution has finitely many structural events, as shown in Section~\ref{sec:execution}. 
For a fixed input their event heights are deterministic. 
Equality of $\Theta_\rho$ with any of these heights has probability zero. 
To specify a feasible algorithm even on that null set, at the first such coincidence clear the actual queue immediately with its oracle action (if nonempty), and thereafter run \Threshold\ on future arrivals. 
This fallback does not change the expected guarantee. 
Completion at height one eventually occurs for every occurrence, and so every threshold in $(0,1)$ is crossed before its completion.
\end{proof}

Every actual service clears the queue. Therefore the sum of its maximum-delay charges equals the total amount of time for which the actual queue is nonempty. At an active time with lowest height $h$, this queue is nonempty if and only if $\Theta_\rho>h$.

\begin{figure}[htbp]
\centering
\input{figures_tikz/virtual_heights}
\caption{\CaptionHeights}
\label{fig:heights}
\end{figure}
Figure~\ref{fig:heights} separates the virtual evolution from the actual services on a two-arrival input. 
Section~\ref{sec:twoarrival} computes the schedules for arbitrary threshold values on the same input.

\subsection{Virtual active time as a lower bound}

For $x \in [0,1]$ define the layer of arrived occurrences below height $x$ by
\[
 A_x(t)=\{i:a_i\le t,\ h_i(t)<x\},
 \qquad P(t)=\int_0^1 g(A_x(t))\,dx.
\]
The integral is just a finite sum over height intervals. 
Completed occurrences have height one and do not affect it. 
Initially $P=0$.
At an arrival epoch $b_j$, let
\begin{equation}
 y_j=P(b_j^+)-P(b_j^-).
 \label{eq:insertion}
\end{equation}
Insertion only adds occurrences to layers, so $y_j\ge0$ by monotonicity.

\begin{lemma}[Potential identity]
\label{lem:potential}
During each active time interval without a structural event, $P$ decreases at rate one. 
Merges and completions do not change $P$.
If $L$ is the total virtual active time, then
\begin{equation}
 L=\sum_{j=1}^n y_j.
 \label{eq:activeidentity}
\end{equation}
\end{lemma}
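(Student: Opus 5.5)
The plan is to compute $dP/dt$ directly from the layer structure and then integrate over time. First, fix an active interval with no structural event. On it the block list is fixed: blocks $B_1,\dots,B_m$ from oldest to newest, with heights $h_1>\cdots>h_m=h$, all below one, and only $B_m$ rising, at speed $1/g(B_m)$. For $x\in[0,1]$ the layer $A_x(t)$ consists of all blocks with height below $x$. Since heights decrease with recency, this is a suffix $B_k\cup\cdots\cup B_m$ of the list; it is empty for $x\le h$. The integral $P(t)$ is therefore a finite sum of terms $g(\text{suffix})$ times the lengths of the height intervals $(h_{k+1},h_k]$, with $h_{m+1}:=h$ below and $1$ at the top.

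As $h$ increases by $dh$, the only change in layers occurs for $x\in(h,h+dh]$. These layers switch from $B_m$ (the layer just above the lowest height, before the next block height $h_{m-1}$) to empty. Every other layer is unchanged, because no other block moves and no merge occurs in the interval. Hence $dP=-g(B_m)\,dh$, and $dP/dt=-g(B_m)\cdot(1/g(B_m))=-1$. This step uses $g(B_m)>0$ from~\eqref{eq:oracle}; zero-cost types were removed earlier. Because $P$ is piecewise linear in $t$ on such intervals, the derivative statement is literal.

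Second, I will check the structural events. At a merge the two blocks share a height, so for every $x$ the set $\{i:h_i<x\}$ is the same before and after: merging only relabels the block structure and leaves every occurrence's height unchanged. Thus $P$ is continuous there. At a completion the block reaches height one. Occurrences with $h_i=1$ are never in $A_x$ for $x\le1$, since we use a strict inequality, and the block was already absent from all layers with $x<1$ as its height tended to one. So $P$ is continuous through completions as well. The insertion jumps $y_j$ are by definition the only discontinuities. Threshold crossings do not touch the virtual process.

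Third, I will integrate. $P$ starts at zero. It increases only by the jumps $y_j$ at arrivals, decreases at unit rate during active time, and stays constant while idle, since then the list is empty and every layer is empty. By Lemma~\ref{lem:full} and the finiteness of structural events (deferred to Section~\ref{sec:execution}), every occurrence eventually completes, so after the last completion every layer is empty and $P=0$. Summing the changes gives $0=\sum_j y_j-L$, which is~\eqref{eq:activeidentity}.

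The main obstacle is the bookkeeping in the first step. I have to verify carefully that the layers for $x$ above the next block height really are unchanged, including the boundary case where $B_m$ is the only block. I also have to confirm that equal heights after a merge do not create a layer in which a partial block appears. The strict inequality $h_i<x$ together with the invariant that blocks have equal heights internally handles both points.
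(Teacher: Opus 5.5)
Your proposal is correct and follows the paper's proof essentially step for step. It computes $dP=-g(B)\,dh=-dt$ on the strip just above the lowest height, shows that merges and completions leave $P$ unchanged because occurrence heights do not move (your strict-inequality remark is a cleaner justification than the paper's ``measure-zero boundary''), and telescopes the insertion jumps against the unit-rate decrease until $P$ returns to zero. The only blemish is a cosmetic indexing slip: the layer intervals should be $(h_k,h_{k-1}]$ with $h_0:=1$, not $(h_{k+1},h_k]$ with $h_{m+1}:=h$.
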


\begin{proof}
Suppose the lowest block $B$ rises from $h$ to $h+dh$ without meeting another block. 
For each layer in that height interval, its only occurrences are those of $B$; afterwards the layer is empty. 
Thus $dP=-g(B)\,dh=-dt$ by~\eqref{eq:speed}. 
Merges leave all occurrence heights unchanged, and completion removes only occurrences at height one, a
measure-zero boundary of the integral. 
After the last arrival, the remaining finite blocks complete, so $P$ finally returns to zero.
Telescoping its insertion jumps and continuous decreases proves the identity.
\end{proof}

\begin{figure}[htbp]
\centering
\input{figures_tikz/layer_potential}
\caption{\CaptionPotential}
\label{fig:potential}
\end{figure}
The area representation in Figure~\ref{fig:potential} makes the local
identity $dP=-dt$ explicit. 
The next lemma bounds the insertion jumps globally using subadditivity.

\Needspace{8\baselineskip}
\begin{lemma}[Consecutive-batch certificate]
\label{lem:batchcertificate}
For every $1\le j\le k\le n$,
\begin{equation}
 \sum_{\ell=j}^k y_\ell\le g(B_{j:k})+b_k-b_j.
 \label{eq:batchcertificate}
\end{equation}
In particular, $L\le\OPT$.
\end{lemma}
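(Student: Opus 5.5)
The plan is to write the left side of \eqref{eq:batchcertificate} as a net change of the potential $P$ over $[b_j,b_k]$ plus the decrease of $P$ over that window. The net change is then bounded by subadditivity of $g$, layer by layer, and the decrease by elapsed time.

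First, the bookkeeping. By Lemma~\ref{lem:potential}, between arrival epochs $P$ decreases at rate exactly one while the process is active and stays constant while it is idle. Merges and completions do not change $P$. Its only jumps are the insertion increments $y_\ell$. Telescoping from $b_j^-$ to $b_k^+$ gives
\[
 \sum_{\ell=j}^k y_\ell = P(b_k^+)-P(b_j^-)+L_{[b_j,b_k]},
\]
where $L_{[b_j,b_k]}\le b_k-b_j$ is the virtual active time inside $[b_j,b_k]$. It therefore suffices to prove $P(b_k^+)-P(b_j^-)\le g(B_{j:k})$.

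Second, the layer comparison. Fix $x\in[0,1]$ and split $A_x(b_k^+)$ into two parts. One part consists of occurrences that arrived before $b_j$. The other consists of occurrences in $B_{j:k}$. Heights never decrease, so any occurrence that arrived before $b_j$ and has height below $x$ at $b_k^+$ already had height below $x$ at $b_j^-$. Hence the first part is contained in $A_x(b_j^-)$. The second part is contained in $B_{j:k}$, and its support lies in $\supp(B_{j:k})$. Subadditivity and monotonicity of $g$ then give
\[
 g(A_x(b_k^+))\le g(A_x(b_j^-))+g(B_{j:k}).
\]
The integrand is piecewise constant in $x$, so integrating over $x\in[0,1]$ yields $P(b_k^+)\le P(b_j^-)+g(B_{j:k})$. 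This proves \eqref{eq:batchcertificate}. This is the only place where subadditivity of $g$, rather than of $\kappa$, is used, and it is the key step. Otherwise the argument only needs care at endpoint conventions: $b_j^-$ must be taken before the cohort $C_j$ is inserted, and $b_k^+$ after $C_k$ is inserted.

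Finally, for $L\le\OPT$, take an optimal consecutive-epoch partition $\mathcal P^*$ from \eqref{eq:normal}, which exists by Lemma~\ref{lem:normal}. By \eqref{eq:activeidentity}, $L=\sum_\ell y_\ell$. Grouping these terms by the batches $B_{j:k}\in\mathcal P^*$ and applying \eqref{eq:batchcertificate} to each batch gives
\[
 L\le\sum_{B\in\mathcal P^*}\bigl(g(B)+\Span(B)\bigr)\le\sum_{B\in\mathcal P^*}\bigl(\kappa(B)+\Span(B)\bigr)=\OPT,
\]
using $g\le\kappa$ from \eqref{eq:oracle}.
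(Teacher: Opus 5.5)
Your proposal is correct and follows the paper's proof essentially step for step. Both use the layerwise inclusion $A_x(b_k^+)\subseteq A_x(b_j^-)\cup B_{j:k}$ with monotonicity and subadditivity of $g$, integrate over $x$, add back at most $b_k-b_j$ units of virtual active time, and then sum over an optimal consecutive-epoch partition using $g\le\kappa$; the only difference is that you state the telescoping identity first, whereas the paper bounds the potential first and then adds the lost active time back.
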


\begin{proof}
Heights of earlier occurrences do not decrease, and all occurrences added between $b_j$ and $b_k$ lie in $B_{j:k}$. For every layer $x$,
\[
 A_x(b_k^+)\subseteq A_x(b_j^-)\cup B_{j:k}.
\]
Monotonicity and subadditivity therefore give
\[
 g(A_x(b_k^+))\le g(A_x(b_j^-))+g(B_{j:k}).
\]
Integrating over $x\in[0,1]$ yields
$P(b_k^+)-P(b_j^-)\le g(B_{j:k})$. 
Between these two potential values, the potential loses the active time in $[b_j,b_k]$, which is no greater than $b_k-b_j$. 
Adding back that loss proves~\eqref{eq:batchcertificate}.

Apply this inequality to all batches of an optimal partition in Lemma~\ref{lem:normal}, use $g(B) \le \kappa(B)$, and sum. 
Every insertion
increment appears once. 
By~\eqref{eq:activeidentity},
\[
 L=\sum_j y_j\le\sum_{B\in\mathcal P^*} \bigl(g(B)+\Span(B)\bigr)\le\OPT,
\]
which proves this lemma.
\end{proof}

This is the only step in the randomized analysis that uses subadditivity of $g$. It compares entire layers by set union; no assertion about marginal costs or nested optimal offline partitions is required.

\subsection{Expected cost}

\begin{theorem}[Randomized oracle bound]
\label{thm:random}
For a normalized monotone subadditive oracle satisfying
Definition~\ref{def:oracle}, \Height\ is strictly
$\Rrho=1/(1-e^{-1/\rho})$-competitive against an oblivious adversary.
\end{theorem}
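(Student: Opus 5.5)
We reduce the claim to Lemma~\ref{lem:batchcertificate}, which gives $L\le\OPT$. It then suffices to show $\E[\ALG]\le\Rrho L$, and we do this by charging costs pointwise in virtual active time. Decompose the virtual trajectory into active intervals without structural events. During each such interval there is a lowest block $B$ at height $h(t)$, rising at speed $1/g(B)$. By Lemma~\ref{lem:full}, the total delay equals the time during which the actual queue is nonempty. At an active time, the queue is nonempty exactly when $\Theta_\rho>h(t)$. The queue is empty at idle times, since every pending occurrence sits in an unfinished block. Hence
\[
 \E[\text{delay}]=\int_{\text{active}}\Prob(\Theta_\rho>h(t))\,dt=\int_{\text{active}}\bigl(1-q_\rho(h(t))\bigr)\,dt .
\]

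For purchases, a service occurs when the rising block $B$ crosses $\Theta_\rho$, and it costs at most $\rho g(B)$. The probability that the crossing lies in $[h,h+dh]$ is $q_\rho'(h)\,dh$, and $dh=dt/g(B)$. So the expected purchase incurred during $dt$ is at most
\[
 \rho g(B)\,q_\rho'(h)\,\frac{dt}{g(B)}=\rho q_\rho'(h)\,dt .
\]
Fix the threshold and look at a single crossing: the block crossing $\Theta_\rho$ is unique at that moment and is deterministic given the input. Its height range near $\Theta_\rho$ is traversed exactly once, because heights never decrease and the crossing block is the lowest block. The expected purchase cost is therefore $\int_{\text{active}}\rho q_\rho'(h(t))\,dt$. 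One must check that the same height level can be traversed at several distinct active times, by different blocks. That is fine, because the integrand is local in $t$: each crossing event corresponds to one active time at which the lowest block's height equals $\Theta_\rho$. The null set of coincidences with event heights, together with its fallback, has probability zero and contributes nothing.

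Combining the two pieces, the pointwise rate is
\[
 \rho q_\rho'(h)+1-q_\rho(h)=\frac{e^{h/\rho}+e^{1/\rho}-e^{h/\rho}}{e^{1/\rho}-1}=\frac{e^{1/\rho}}{e^{1/\rho}-1}=\Rrho ,
\]
which is constant in $h$. This constancy is why the exponential distribution was chosen. Integrating gives $\E[\ALG]\le\Rrho L\le\Rrho\OPT$, with no additive constant, so the ratio is strict.

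The main obstacle is making the purchase-cost change of variables rigorous. Expected purchase is a sum over crossing events indexed by threshold value, and we must convert it into a time integral. Heights are strictly increasing along the active portion of the trajectory, because only the lowest block moves and it moves at positive speed. Therefore the map from active time to the lowest height is piecewise a strictly increasing bijection onto height intervals. A threshold may lie in several such pieces, corresponding to several crossings by successive blocks after new arrivals. Summing over the pieces, $\sum\rho g(B)\Prob(\Theta_\rho\in\text{piece range})$ equals $\int\rho q_\rho'\,dt$ piece by piece. Finiteness of the structural events (Section~\ref{sec:execution}) makes this sum finite.
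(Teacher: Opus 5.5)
Your proposal is correct and follows essentially the same route as the paper. You fix the seed-independent virtual trajectory, bound the expected purchase rate by $\rho q_\rho'(h)\,dt$ and the expected busy-time rate by $(1-q_\rho(h))\,dt$ via Lemma~\ref{lem:full}, use the constant sum $\Rrho$, and conclude with $L\le\OPT$ from Lemma~\ref{lem:batchcertificate}; your phase-by-phase change of variables, summing $\rho g(B)\bigl(q_\rho(h_{\mathrm{end}})-q_\rho(h_{\mathrm{start}})\bigr)$ over the finitely many phases, is simply a more explicit rendering of the paper's infinitesimal argument.
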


\begin{proof}
Fix the finite input. Its entire virtual trajectory is deterministic.
During a phase with rising block $B$, height $h$, and $dt=g(B)\,dh$, a threshold in $(h,h+dh)$ causes one purchase of cost at most $\rho g(B)$.
The expected purchase cost in that interval is consequently at most
\[
 \rho g(B)q_\rho'(h)\,dh =\rho q_\rho'(h)\,dt.
\]
By Lemma~\ref{lem:full}, expected actual busy time in the same interval is $(1-q_\rho(h))\,dt$. 
The sum of these two rates is constant:
\begin{equation}
 \rho q_\rho'(h)+1-q_\rho(h) = \frac{e^{h/\rho}+e^{1/\rho}-e^{h/\rho}}{e^{1/\rho}-1} =\Rrho.
 \label{eq:rate}
\end{equation}
During virtual idle time the actual queue is also empty. 
Integrating over all active phases gives $\E[\ALG]\le\Rrho L\le\Rrho\OPT$ by Lemma~\ref{lem:batchcertificate}. 
The shared threshold correlates services, but linearity of expectation suffices: the virtual trajectory being integrated is independent of that threshold.
\end{proof}

The distribution in~\eqref{eq:cdf} can also be derived by equalizing the two expected rates. Solving $\rho q'(h)+1-q(h)=R$ subject to $q(0)=0$ and $q(1)=1$ gives exactly $q_\rho$ and $R=\Rrho$.
For an exact oracle with minimum-cost realizations ($\rho=1$), the purchase-rate inequality is an equality, so the proof gives the useful identity
\begin{equation}
 \E[\ALG]=\frac{e}{e-1}L.
 \label{eq:exactidentity}
\end{equation}

\subsection{A two-arrival example}
\label{sec:twoarrival}

Suppose a single request type costs one to serve, and requests arrive at times $0$ and $1/2$. The first virtual block reaches height $1/2$ when the second arrives. The new block then rises alone until time $1$, when the blocks merge at height $1/2$. Their union reaches height one at time $3/2$. Thus $L=3/2=\OPT$.

If $0<\Theta_1<1/2$, the two actual services occur at times $\Theta_1$ and $1/2+\Theta_1$, for total cost $2+2\Theta_1$. 
If $1/2<\Theta_1<1$, there is a single service at time $1/2+\Theta_1$, for cost $3/2+\Theta_1$. 
These different schedules use the same virtual trajectory, and~\eqref{eq:exactidentity} gives their average cost as $\tfrac32 e/(e-1)$. Resetting heights at the first actual service changes the construction and invalidate the trajectory-based proof.

\subsection{Finite execution and computational model}
\label{sec:execution}

A \emph{virtual phase} is a maximal positive-duration interval on which the rising block is fixed and no arrival occurs. 
An actual threshold crossing is not a virtual phase boundary. Each of the $n$ arrival epochs
creates one block. 
A merge reduces the number of active blocks by one, and a completion does the same. 
In every finite prefix, the number of merges plus completions is at most the number of blocks created, hence at most $n$. 
After the last arrival, each nonempty active configuration reaches its next merge or completion in finite time, because its speed is positive and fixed until that event. 
Thus the process terminates; there is no accumulation of infinitely many events. 
At termination, if $M$ and $D$ count merges and completions, then $M + D = n$, $D \ge 1$, and $M \le n-1$ for a nonempty input.

Completion is possible only when one unfinished block remains: any older unfinished block would have strictly greater height, still below one. 
Therefore a completion cannot start an uncounted active phase.
Each phase starts at an arrival or a merge, and there are at most $n+M\le2n-1$ phases. 
Almost surely, at most $2n-1$ combined value/realization oracle calls suffice, caching the action associated with each phase, and there are at most $n$ actual services. 
The null-set fallback specified in Lemma~\ref{lem:full} also uses $O(n)$ calls and at most $n$ services.

With rational batch values and arrival times, the virtual event calculations use rational arithmetic and a polynomial number of arithmetic operations, in addition to the batch oracle. 
Maintaining explicit block supports adds polynomial overhead. 
For the exact randomized ratio we use the usual continuous-time, real-arithmetic randomization model: one may sample $V$ uniformly in $(0,1)$ and set
\[
        \Theta_\rho=\rho\log\bigl(1+(e^{1/\rho}-1)V\bigr).
\]
This assertion is not a finite-precision sampling theorem. 
A bit-model implementation would need a specified approximation to the distribution and a separate accounting of the resulting error. 
The deterministic algorithm and the offline dynamic program have no such sampling issue.

%% file: figures_tikz/virtual_heights.tex
\begin{tikzpicture}[vc figure, x=6cm, y=3.3cm]
  \node[vc panel] at (-.20,1.23) {(a) One fixed virtual trajectory};
  \foreach \t in {.5,1,1.5} \draw[vc guide] (\t,0)--(\t,1.05);
  \draw[vc guide] (0,.5)--(1.58,.5);
  \draw[vc guide] (0,1)--(1.58,1);
  \draw[vc axis] (0,0)--(1.64,0) node[right,vc note] {$t$};
  \draw[vc axis] (0,0)--(0,1.08) node[above,vc note] {height};
  \foreach \t/\lab in {0/0,.5/{1/2},1/1,1.5/{3/2}}
    \node[vc note,below=4pt] at (\t,0) {$\lab$};
  \foreach \h/\lab in {.5/{1/2},1/1}
    \node[vc note,left=4pt] at (0,\h) {$\lab$};

  \draw[vcmuted,densely dotted,line width=.7pt] (0,.25)--(1.58,.25);
  \draw[vcmuted,densely dotted,line width=.7pt] (0,.75)--(1.58,.75);
  \node[vc note,anchor=west] at (1.64,.25) {$\theta=1/4$};
  \node[vc note,anchor=west] at (1.64,.75) {$\theta=3/4$};
  \draw[vcblue,line width=1.5pt] (0,0)--(.5,.5)--(1,.5);
  \draw[vcorange,dashed,line width=1.5pt] (.5,0)--(1,.5);
  \draw[vcink,line width=2pt] (1,.5)--(1.5,1);
  \node[vc vertex,fill=vcink] at (1,.5) {};
  \node[vc note,anchor=west] at (1.06,.43) {merge};
  \node[vc note,vcblue] at (.69,.6) {$\{r_1\}$ waits};
  \node[vc note,vcorange,anchor=west] at (.72,.11) {$\{r_2\}$ rises};
  \node[vc note,vcink,anchor=west] at (1.07,.94) {$\{r_1,r_2\}$};
  \foreach \t/\h in {.25/.25,.75/.25,1.25/.75}
    \node[circle,draw=vcink,fill=white,minimum size=5.4pt,inner sep=0pt,line width=.8pt]
      at (\t,\h) {};

  \node[vc panel] at (-.20,-.38) {(b) Actual services selected by the threshold};
  \begin{scope}[yshift=-.32cm]
  \node[vc note,anchor=east] at (-.035,-.66) {$\theta=1/4$};
  \fill[vcblue!16] (0,-.70) rectangle (.25,-.60);
  \fill[vcorange!20] (.5,-.70) rectangle (.75,-.60);
  \draw[vc axis] (0,-.70)--(1.59,-.70);
  \node[vc arrival] at (0,-.70) {};
  \node[vc arrival] at (.5,-.70) {};
  \foreach \t in {.25,.75} {
    \draw[vcink,line width=.7pt] (\t,-.70)--(\t,-.56);
    \node[vc service] at (\t,-.56) {};
  }
  \node[vc note,above=3pt] at (.25,-.56) {$1/4$};
  \node[vc note,above=3pt] at (.75,-.56) {$3/4$};
  \node[vc note,anchor=west,align=left] at (1.64,-.65)
    {two services\\cost $2+1/2=5/2$};

  \node[vc note,anchor=east] at (-.035,-1.08) {$\theta=3/4$};
  \fill[vcblue!12] (0,-1.12) rectangle (1.25,-1.02);
  \draw[vc axis] (0,-1.12)--(1.59,-1.12);
  \node[vc arrival] at (0,-1.12) {};
  \node[vc arrival] at (.5,-1.12) {};
  \draw[vcink,line width=.7pt] (1.25,-1.12)--(1.25,-.98);
  \node[vc service] at (1.25,-.98) {};
  \node[vc note,above=3pt] at (1.25,-.98) {$5/4$};
  \node[vc note,anchor=west,align=left] at (1.64,-1.07)
    {one service\\cost $1+5/4=9/4$};
  \foreach \t/\lab in {0/0,.5/{1/2},1/1,1.5/{3/2}}
    \node[vc note,below=4pt] at (\t,-1.12) {$\lab$};
  \node[vc note,align=center] at (.8,-1.39)
    {Dots: arrivals. Squares: services. Shaded intervals: actual waiting.};
  \end{scope}
\end{tikzpicture}

%% file: figures_tikz/layer_potential.tex
\begin{tikzpicture}[vc figure, x=1cm, y=4cm]
  \node[vc panel] at (-.55,1.25) {(a) Only the lowest block rises};
  \draw[vc axis] (0,0)--(0,1.1) node[above,vc note] {height};
  \draw[vc axis] (0,0)--(5.3,0);
  \foreach \y in {.2,.6,.85,1} \draw[vc guide] (0,\y)--(5.05,\y);
  \node[vc note,left=3pt] at (0,1) {$1$};
  \node[vc note,left=3pt] at (0,.2) {$h$};
  \draw[vcblue!45,line width=1pt] (1,0)--(1,.85) (2.75,0)--(2.75,.6);
  \node[draw=vcblue,fill=vcblue!10,minimum width=24pt,minimum height=15pt]
    at (1,.85) {$B_1$};
  \node[draw=vcblue,fill=vcblue!10,minimum width=24pt,minimum height=15pt]
    at (2.75,.6) {$B_2$};
  \draw[vcblue!45,line width=1pt] (4.5,0)--(4.5,.2);
  \node[draw=vcblue,fill=vcblue!10,minimum width=24pt,minimum height=15pt]
    at (4.5,.2) {$B_3$};
  \draw[vc guide] (4.5,.34)--(5.13,.34);
  \draw[vcorange,-{Stealth[length=4pt]},line width=1pt] (5.08,.2)--(5.08,.34);
  \node[vc note,anchor=west,vcorange] at (5.17,.27) {$dh$};
  \node[vc note] at (1,-.10) {oldest};
  \node[vc note] at (4.5,-.10) {newest};
  \node[vc note] at (2.7,-.23) {arrival order};

  \begin{scope}[xshift=7.45cm]
    \node[vc panel] at (-.65,1.25) {(b) Potential is the shaded area};
    \fill[vcblue!10] (0,.2)--(2,.2)--(2,.6)--(3.3,.6)--(3.3,.85)
                         --(4.55,.85)--(4.55,1)--(0,1)--cycle;
    \fill[vcorange!23] (0,.2) rectangle (2,.34);
    \fill[pattern=north east lines,pattern color=vcorange!70]
      (0,.2) rectangle (2,.34);
    \draw[vcblue,line width=1.1pt] (0,.2)--(2,.2)--(2,.6)--(3.3,.6)
                         --(3.3,.85)--(4.55,.85)--(4.55,1)--(0,1);
    \draw[vc axis] (0,0)--(0,1.1) node[above,vc note] {$x$};
    \draw[vc axis] (0,0)--(5.02,0) node[right,vc note] {$g(A_x)$};
    \node[vc note,left=3pt] at (0,.2) {$h$};
    \node[vc note,left=3pt] at (0,.34) {$h+dh$};
    \node[vc note,left=3pt] at (0,1) {$1$};
    \draw[vc guide] (0,.34)--(2,.34);
    \node[vc note] at (1,.46) {$A_x=B_3$};
    \node[vc note] at (1.65,.72) {$A_x=B_2\cup B_3$};
    \node[vc note] at (2.27,.925) {$A_x=B_1\cup B_2\cup B_3$};
    \draw[vcorange,-{Stealth[length=4pt]},line width=.65pt]
      (3.13,.31)--(2.12,.27);
    \node[vc note,anchor=west,align=left,vcorange] at (3.15,.31)
      {removed strip\\area $g(B_3)\,dh$};
    \node[vc note] at (2.38,-.16) {$P(t)=\displaystyle\int_0^1g(A_x(t))\,dx$};
  \end{scope}

  \node[fill=vclight,rounded corners=2pt,inner xsep=12pt,inner ysep=6pt]
    at (6.5,-.47)
    {$\displaystyle \frac{dh}{dt}=\frac1{g(B_3)}
      \quad\Longrightarrow\quad dP=-g(B_3)\,dh=-dt$};
\end{tikzpicture}

%% file: sections_subadditive/lower_bounds.tex
\section{Matching Lower Bounds}
\label{sec:lower}

Fix any request type $u$ whose minimum service cost is $c>0$ and allow repeated occurrences of that type. Restrict inputs to $u$ alone. 
Every useful service clears the entire queue and costs at least $c$, and a minimum action realizes cost $c$. 
Dividing costs and times by $c$ reduces the timing problem to one with unit purchase cost. 
Accordingly, all lower bounds in this section apply to every fixed nontrivial system in our model. 
For vertex cover, the simplest such system is a single edge with unit-cost endpoints.

\subsection{Deterministic algorithms}

\begin{theorem}
\label{thm:detlower}
No deterministic algorithm has competitive ratio less than $2$, even with a fixed additive constant and only one request type.
\end{theorem}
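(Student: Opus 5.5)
The plan is the classical acknowledgment adversary: send a new request of type $u$ the instant the deterministic algorithm serves its queue. As noted above, I normalize the unit purchase cost to one. An additive constant is defeated by repeating the construction many times.

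\textbf{Construction.} Fix a deterministic algorithm and a large integer $m$. Release one request at time $0$. Each time the algorithm performs a service, it has just cleared the queue, so I release the next request at that same timestamp, immediately after the service. By the arrival convention, such a request is pending only after that service. I stop after $m$ requests.

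If the algorithm never serves some request, it has infinite cost and there is nothing to prove. Otherwise its $\ell$-th service happens at time $t_\ell=t_{\ell-1}+w_\ell$ with $w_\ell\ge0$, and each service clears exactly one request. Hence $\ALG=\sum_{\ell=1}^m(1+w_\ell)=m+T$, where $T=t_m$ is the last service time. The arrivals occur at $t_0=0,t_1,\dots,t_{m-1}$.

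\textbf{Offline comparison.} I bound $\OPT$ using two explicit schedules, or more precisely one parametrized family of partitions via the normal form \eqref{eq:normal}.

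The first schedule serves every request at its arrival. It costs $m$.

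The second pairs consecutive arrivals $(t_{2i},t_{2i+1})$ into batches served at the later arrival. It costs at most $\lceil m/2\rceil+\sum_i w_{2i+1}$.

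Shifting the pairing by one gives a second pairing that uses the even-indexed gaps instead. Averaging the two pairings gives $\OPT\le m/2+T/2+O(1)$. Combining this with $\OPT\le m$, I get
\[
\OPT\le\min\Bigl(m,\tfrac{m+T}{2}+O(1)\Bigr).
\]
Since $\ALG=m+T$, this gives $\ALG\ge 2\OPT-O(1)$ in both cases. If $T\le m$, then $\ALG=m+T\ge 2\cdot\tfrac{m+T}{2}$. If $T>m$, then $\ALG>2m\ge2\OPT$.

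\textbf{Defeating the additive constant.} Suppose $\ALG\le R\,\OPT+c_0$ with $R<2$. From the above, $\OPT\ge m/2$ is not automatic, so I check that $\OPT\to\infty$ as $m\to\infty$. Every service costs at least one and serves a whole batch. Also, each arrival after the first occurs at a distinct time no earlier than the previous service.

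The potential gap is that $\OPT$ could stay bounded if all the $w_\ell$ are zero. But then the algorithm pays $m$ purchases while $\OPT$ can batch everything at time $0$ at cost about $1$, and the ratio is unbounded. So in general I use $\OPT\ge1$ together with $\ALG\ge2\OPT-O(1)$ where the $O(1)$ is at most $1$.

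The cleaner route is to strengthen the bound to $\ALG\ge 2\OPT-1$ and then to note that $\ALG\to\infty$ because $\ALG\ge m$. Then $\OPT\ge(\ALG-c_0)/R\to\infty$, and $2\OPT-1\le R\,\OPT+c_0$ fails for large $m$.

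\textbf{Main obstacle.} The main delicate step is getting the pairing bound with a genuinely constant error term, independent of the $w_\ell$, for odd $m$ and the boundary batch. I would handle it by leaving the unpaired request as a singleton batch of cost $1$.
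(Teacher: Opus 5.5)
\textbf{The gap is in the construction.} You release request $\ell+1$ at the very timestamp of the service of request $\ell$, claiming that ``by the arrival convention'' it becomes pending only after that service. The model states the opposite: all arrivals at a timestamp are revealed before services at that timestamp. A request released at $t_\ell$ is therefore visible to any service at $t_\ell$, and since it has the same type, that service clears it too.

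This also leaves your adaptive rule without a consistent fixed input. The algorithm's decision at $t_\ell$ may change once it sees the new arrival, so the run on the final input need not match your simulated prefixes. Concretely, take the algorithm that serves as soon as its queue is nonempty. Your rule puts every request at time $0$; under the model all of them are pending at that instant, one purchase clears them, and $\ALG=\OPT=1$. Your treatment of the case ``all $w_\ell=0$'', where the algorithm pays $m$ purchases while offline batches everything at time $0$, rests on exactly this misreading. So the claim that each service clears exactly one request, and hence $\ALG\ge m+\sum_\ell w_\ell$, is unjustified as written.

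\textbf{The repair, which is what the paper does,} is to release the next request a positive $\varepsilon$ after each service, with $(m-1)\varepsilon$ bounded. The paper takes $\varepsilon=1/(m-1)$. Arrivals are then strictly increasing, each service clears exactly one request, and $\ALG\ge m+X$ with $X=\sum_\ell w_\ell$. Offline pays at most one extra unit of delay in total.

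With this change your argument goes through. The two shifted pairings use $m+1$ batches in total and cover every gap $w_\ell+\varepsilon$ once, so averaging gives $\OPT\le(m+2+X)/2$. Hence $\ALG\ge 2\OPT-2$ pathwise, and your additive-constant argument via $\ALG\ge m\to\infty$ is then valid. The genuinely delicate point was this timing convention, not the parity of $m$.

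\textbf{Comparison with the paper.} The paper's offline side is simpler: serve everything in one batch at the last arrival. Together with serving each request at arrival, this gives $\OPT\le\min\{m,2+X\}$, and a two-case computation yields $\ALG-R\,\OPT\ge(2-R)m-2$. Your pairing average is a valid alternative. It needs neither the bound $\OPT\le m$ nor a case split, at the cost of slightly more bookkeeping of batch counts.
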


\begin{proof}
Fix an algorithm and an integer $m\ge2$. Release the first request at zero. 
After the algorithm serves request $i$, wait $\varepsilon=1/(m-1)$ and release request $i+1$, stopping after $m$ requests. 
An algorithm that fails to serve a request in finite time has no finite guarantee. 
Otherwise let $x_i$ be the waiting time of request $i$ and $X=\sum_{i=1}^m x_i$. 
Each useful service serves exactly one request, so $\ALG\ge m+X$.

Offline may serve each request immediately, at cost $m$, or serve all requests at the last arrival, at cost
$1+\sum_{i<m}x_i+(m-1)\varepsilon\le2+X$. Hence
\[
 \frac{\ALG}{\OPT} \ge \frac{m+X}{\min\{m,2+X\}}\ge2-\frac2m.
\]
For completeness, when $X\le m-2$ the fraction with denominator $2+X$ is minimized at $X=m-2$, and when $X\ge m-2$ the fraction with denominator $m$ is at least the same value. 
For any $1\le R<2$, these two cases also give
\begin{equation}
 \ALG-R\OPT\ge(2-R)m-2.
 \label{eq:detgap}
\end{equation}
Letting $m$ grow excludes a fixed additive constant. 
For a fixed deterministic algorithm the service-dependent construction can be simulated in advance. 
It therefore yields one fixed finite input with strictly increasing arrivals, as required for a deterministic lower bound.
\end{proof}

\subsection{Randomized algorithms against an oblivious adversary}

For a one-type input with successive gaps $g_1, \ldots, g_N$, the offline normal form gives
\begin{equation}
 \OPT=1+\sum_{i=1}^N\min\{g_i,1\}.
 \label{eq:onetypeopt}
\end{equation}
Indeed, keeping two consecutive arrivals in one batch contributes their gap to the total batch spans, while cutting that gap contributes one additional purchase. 
The choices are independent across gaps.

\begin{theorem}
\label{thm:randomlower}
No randomized algorithm against an oblivious adversary has competitive ratio less than $e/(e-1)$, even with a fixed additive constant and only one request type.
\end{theorem}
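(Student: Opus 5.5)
We plan to use Yao's principle. It suffices to exhibit a distribution over one-type inputs on which every deterministic algorithm has expected cost at least $(e/(e-1)-o(1))$ times the expected optimum, with the expected optimum growing without bound so that a fixed additive constant is excluded. The natural distribution is the one that makes the rent-or-buy tradeoff of each gap indifferent. By the normal form~\eqref{eq:onetypeopt}, a one-type input is just a sequence of gaps, and $\OPT$ decomposes gap by gap as $\min\{g_i,1\}$. We therefore build many independent ``rounds'', each of which looks to the algorithm like ski rental.

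Concretely, we first prove a single-round bound. Release one request at time $0$, and then, with an unknown random gap $G$, either release a second request at time $G$ or (in a stopping variant) end the round. The algorithm's only decision before time $G$ is when to serve the pending request. Waiting until time $x$ and then serving costs $1+x$ and resets the queue. If the second request arrives before $x$, the algorithm has effectively paid delay $G$ and kept the batch open, which matches the offline choice $\min\{G,1\}$. The cleaner way is the standard continuous ski-rental adversary: $G$ has density $e^{g}/(e-1)$ on $[0,1]$ and an atom handled so that every deterministic serving time $x\in[0,1]$ yields expected online cost exactly $\frac{e}{e-1}\E[\min\{G,1\}]$.

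To translate online costs, we note that a service at time $x<G$ charges $1$ purchase plus $x$ waiting, after which the next request starts a fresh batch. Not serving before $G$ means the waiting time $G$ is charged inside the eventual batch. Thus the per-gap online cost is $x+1$ if $x<G$ and $G$ otherwise, exactly as in ski rental with rent rate one and buy price one.

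Next, we chain $m$ rounds. Arrivals occur at $0=t_0<t_1<\dots<t_m$ with independent gaps $G_1,\dots,G_m$ drawn from the ski-rental distribution, so $\E[\OPT]=1+m\,\E[\min\{G,1\}]$ by~\eqref{eq:onetypeopt}. We lower-bound any deterministic algorithm gap by gap. At the start of gap $i$ the queue contains requests whose oldest age is $w\ge0$. The algorithm must eventually either serve at some time $x$ after $t_{i-1}$ (cost $1$ plus the delay accrued during the gap up to $x$) or carry the queue across the gap (delay $G_i$). Charging to gap $i$ the delay increment inside the gap plus any purchase made during it, the conditional expected charge is at least $\frac{e}{e-1}\E\min\{G,1\}$ for any adaptively chosen $x$, by independence of $G_i$ from the past. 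Summing gives $\E[\ALG]\ge \frac{e}{e-1}m\E\min\{G,1\}$, and then
\[
\E[\ALG]-R\,\E[\OPT]\ge\Bigl(\tfrac{e}{e-1}-R\Bigr)m\,\E\min\{G,1\}-R,
\]
which diverges for $R<e/(e-1)$.

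The main obstacle is the per-gap charging when the queue already holds old requests. A purchase inside gap $i$ also serves earlier requests whose delay accrued in earlier gaps. We must check that the accounting ``delay increment within the gap plus purchases within the gap'' exactly reproduces total cost. It does, because the maximum-delay charge of a batch equals the union length of the gaps it spans, and this sum telescopes. We must also check that gaps exceeding one are harmless: we would put the ski-rental mass on $[0,1]$ plus an atom at a gap $\ge1$, where the algorithm's best response is to serve at some $x\le1$. A second, minor point is the final batch. The last request must be served, which costs one extra purchase for both online and offline and only affects the additive term.
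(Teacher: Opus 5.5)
Your architecture matches the paper's proof: independent gaps, a per-gap charge of busy time plus the indicator of a purchase before the next arrival, the fact that your serving time $x$ in a gap depends only on a history independent of the fresh gap, $\E[\OPT]=1+m\,\E[\min\{G,1\}]$ from \eqref{eq:onetypeopt}, and linear divergence in $m$ to exclude the additive constant. The gap is in the one quantitative ingredient, the hard distribution, which you misstate. The density $e^{g}/(e-1)$ on $[0,1]$ is the ski-rental \emph{algorithm's} buying density, not the adversary's. It already has total mass one, so there is no room for an atom, and under it every gap is at most $1$. Then the response ``do not serve before the next arrival'' has per-gap charge exactly $G=\min\{G,1\}$, so your claimed per-gap bound fails. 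Indeed, the deterministic algorithm that serves only after the last arrival matches $\OPT$. Equalizing $\int_0^x g f(g)\,dg+(1+x)\Prob(G>x)$ over $x\in[0,1]$ forces $f(x)=\Prob(G>x)$, i.e.\ $f_G(g)=e^{-g}$ on $(0,1)$. This leaves mass $1/e$ for the atom, and then the charge is $1=\frac{e}{e-1}\E[\min\{G,1\}]$ for every $x\le1$.

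Second, the atom's position is not free in $[1,\infty)$, and your claim that the best response serves at some $x\le 1$ is false for part of that range. In classical ski rental the atom sits at infinity; here, declining to serve within a gap costs only the finite waiting time $G$. With the atom at $a$, any $T\ge a$ (including $T=\infty$) has expected charge $\E[G]=1-(2-a)/e$, which is below $1$ when $a<2$. You need $a\ge2$. The paper takes $a=2$, so $\E[G]=1$, and verifies \eqref{eq:gapcost} for all $T\in[0,\infty]$: the charge is $1$ for $T\le1$, $1+(T-1)/e$ for $1<T<2$, and $\E[G]=1$ for $T\ge2$. With this distribution your per-gap bound becomes $\E[\min\{T,G\}+\ind\{T<G\}]\ge1$, so $\E[\ALG]\ge m+1$ against $\E[\OPT]=1+m(1-1/e)$, and the rest of your argument goes through.
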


\begin{proof}
Fix $N\ge1$. Release $N+1$ requests, the first at zero and with independent successive gaps distributed as $G$, where
\[
 f_G(g)=e^{-g}\quad(0<g<1),\qquad \Prob(G=2)=1/e.
\]
The density and atom have total mass one. This input distribution is chosen independently of the algorithm. Direct integration gives
\[
 \E[G]=1,\qquad \E[\min\{G,1\}]=1-1/e,
 \qquad \E[\OPT]=1+N(1-1/e).
\]

Fix the algorithm's random seed and condition on its history just before processing an arrival. 
In the hypothetical continuation with no further arrivals, let $T\in[0,\infty]$ be the time from this arrival to the first useful service. 
Before the next arrival the real execution agrees with that continuation. 
Within this gap, the queue is nonempty for at least $\min\{T,G\}$ time units and a purchase costing at least one occurs if $T<G$. 
Thus the contribution of this gap to busy time and purchases is at least
\[
 \min\{T,G\}+\ind\{T<G\}.
\]
The new gap is independent of the conditioned history and of the fixed seed. For every $T$,
\begin{equation}
 \E[\min\{T,G\}+\ind\{T<G\}]=
 \begin{cases}
  1,&0\le T\le1,\\
  1+(T-1)/e,&1<T<2,\\
  1,&T\ge2,
 \end{cases}
 \label{eq:gapcost}
\end{equation}
where the final case includes $T=\infty$. 
For $0\le T\le1$, the two terms are $\int_0^T e^{-x}\,dx$ and $e^{-T}$. 
For $1<T<2$, they are $1-1/e+(T-1)/e$ and $1/e$.
For $T\ge2$, pending time alone has expectation $\E[G]=1$ and the strict purchase indicator is zero. 
In particular the atom at $G=2$ causes no exception.

On a one-type input every useful service clears the queue, so total maximum delay equals total busy time. 
Contributions assigned to distinct gaps are disjoint. 
A purchase tied with the next arrival is deliberately excluded by $T<G$ and can instead be assigned to the next gap, or to the final interval. 
After the last arrival at least one more useful purchase is necessary, at cost at least one. It follows that
\[
 \E[\ALG]\ge N+1.
\]
This holds for every fixed seed and hence also after averaging over seeds.
If a randomized algorithm satisfied a ratio $R$ with additive constant $\beta$ on every fixed input, averaging that guarantee over the above distribution would give
\[
 N+1\le R\bigl(1+N(1-1/e)\bigr)+\beta.
\]
For $R<e/(e-1)$ the difference between the left side and the term multiplied by $R$ diverges to infinity with $N$, a contradiction.
\end{proof}

The distribution above has a continuous part and one atom; it is not a finite-support distribution. 
The proof needs only finitely many requests per input and the displayed expectations. 
No minimax theorem or discretization claim is used.

\begin{corollary}[Optimal exact-oracle ratios]
\label{cor:optimal}
On every fixed static service system satisfying the model and containing a positive-cost request type, the optimal deterministic competitive ratio is $2$, and the optimal oblivious-adversary randomized competitive ratio is $e/(e-1)$, when batch optimization is unrestricted.
\end{corollary}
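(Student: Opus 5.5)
The corollary follows by matching the upper bounds of Sections~\ref{sec:det} and~\ref{sec:random} with the lower bounds of Section~\ref{sec:lower}.

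\textbf{Upper bounds.} Take the exact oracle: $g=\kappa$, with $S_g(A)$ a fixed minimum-cost action attaining $\kappa(A)$, ties broken by a fixed rule, so $\rho=1$. We check Definition~\ref{def:oracle}.
\begin{itemize}
\item Normalization holds because $\kappa(\varnothing)=0$.
\item Positivity on nonempty batches holds after the zero-cost reduction of Section~\ref{sec:model}. That reduction preserves the optimum and adds no online cost.
\item Monotonicity follows from the definition of $\kappa$.
\item Subadditivity is the model assumption~\eqref{eq:kappasub}.
\item The sandwich $g\le\kappa\le c(S_g)\le\rho g$ is trivial with equalities.
\end{itemize}
Theorem~\ref{thm:det} then makes \Threshold\ strictly $2$-competitive. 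Theorem~\ref{thm:random} makes \Height\ strictly $R_1=1/(1-e^{-1})=e/(e-1)$-competitive against an oblivious adversary.

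\textbf{Lower bounds.} Fix a type $u$ with $c=\kappa(\{u\})>0$. By the reduction at the start of Section~\ref{sec:lower}, inputs using only $u$ are legal in the fixed system. Every useful service on such inputs clears the whole queue at purchase cost at least $c$, and cost exactly $c$ is attainable. Rescaling costs and times by $c$ therefore gives the unit-cost single-type problem, without changing competitive ratios. Theorem~\ref{thm:detlower} then excludes any deterministic ratio below $2$, even with an additive constant. Theorem~\ref{thm:randomlower} excludes any oblivious randomized ratio below $e/(e-1)$.

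\textbf{Main obstacle.} The only delicate point is the rescaling. An online algorithm for the system, restricted to $u$-inputs, must induce an algorithm for the unit problem whose costs are exactly $c$ times smaller, and the offline optimum must scale the same way. This holds because purchase costs in $u$-only services are at least $c$, so the induced cost only overestimates $c\cdot$(unit cost). Also, $\OPT$ scales by exactly $c$ by~\eqref{eq:onetypeopt} after rescaling. The additive constant $\beta$ scales to $\beta/c$, which is still a fixed constant. Combining the two directions gives the exact optimal constants.
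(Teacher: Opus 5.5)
Your proposal is correct and follows the paper's proof: instantiate Theorems~\ref{thm:det} and~\ref{thm:random} with $g=\kappa$, $\rho=1$ and minimum-cost realizations. Then transfer the single-type lower bounds by restricting to a positive-cost type and rescaling by its minimum cost. Your explicit treatment of the rescaling direction (system cost at least $c$ times the induced unit cost, $\OPT$ exactly $c$ times, additive constant becoming $\beta/c$) correctly fills in details that the paper leaves implicit.
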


\begin{proof}
Use $g=\kappa$, $\rho=1$, and minimum-cost realizations in
Theorems~\ref{thm:det} and~\ref{thm:random}. 
Apply the lower bounds above to any positive-cost type and rescale by its minimum cost.
\end{proof}

\subsection{Feedback-dependent arrivals}
\label{sec:adaptive}

Here an adversary may observe completed services before choosing later arrival times. 
We compare expected online cost to the expected optimum computed in hindsight for the realized input:
\[
 \E[\ALG(\sigma)]\le R \, \E[\OPT(\sigma)]+\beta.
\]
This explicitly differs from the fixed-input, oblivious convention.

\begin{proposition}
\label{prop:adaptive}
Under this feedback model and benchmark, no randomized algorithm has ratio less than $2$ with a fixed additive constant. 
Exact-oracle \Threshold\ attains $2$, so the bound is optimal.
\end{proposition}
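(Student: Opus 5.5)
The proof has two parts. The upper bound is immediate from the deterministic algorithm. The lower bound comes from an adaptive adversary that defeats any randomized algorithm by releasing each new request only after the previous one has been served, exactly as in the deterministic construction of Theorem~\ref{thm:detlower}.

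\emph{Upper bound.} The exact-oracle \Threshold\ is deterministic. By Theorem~\ref{thm:det} with $\rho=1$, it satisfies $\ALG(\sigma)\le 2\OPT(\sigma)$ for every input $\sigma$. This holds in particular for every realized input, including inputs chosen in response to its services, so taking expectations over the adversary's randomness gives $\E[\ALG]\le2\E[\OPT]$.

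\emph{Lower bound.} Restrict to one positive-cost type and rescale it to unit cost, as at the start of Section~\ref{sec:lower}. Fix $m\ge2$ and $\varepsilon=1/(m-1)$. The adversary releases request $1$ at time $0$. After observing the service of request $i$, it waits $\varepsilon$ and releases request $i+1$, stopping after $m$ requests. An algorithm that never serves some request has unbounded cost with positive probability, so we may assume every request is served. On every realization each service then serves exactly one request, so
\[
 \ALG\ge m+X, \qquad X=\sum_i x_i,
\]
where $x_i$ is the waiting time of request $i$. In hindsight the realized input admits two offline schedules: serve every request immediately, at cost $m$, or batch everything at the last arrival, at cost at most $1+X+1=2+X$. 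Hence
\[
 \OPT\le\min\{m,\,2+X\}\quad\text{pointwise}.
\]

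\emph{Main obstacle.} The difficulty is converting this pointwise relation into a statement about expectations, since $X$ is random and the ratio of expectations is not the expectation of the ratio. I will avoid ratios and instead prove, for any fixed $1\le R<2$,
\[
 \ALG-R\OPT\ge(2-R)m-2
\]
pointwise on every realization. This is exactly the two-case argument behind~\eqref{eq:detgap}. If $X\le m-2$, then
\[
 m+X-R(2+X)=(1-R)X+m-2R\ge(1-R)(m-2)+m-2R=(2-R)m-2,
\]
because $1-R\le0$ and $X\le m-2$. If $X\ge m-2$, then
\[
 m+X-Rm\ge m+(m-2)-Rm=(2-R)m-2.
\]
Taking expectations gives
\[
 \E[\ALG]-R\,\E[\OPT]\ge(2-R)m-2,
\]
which tends to infinity as $m\to\infty$. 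No fixed additive constant $\beta$ can absorb this gap, so no randomized algorithm achieves ratio $R<2$ against this adversary.

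Note that the bound is reached only through the feedback: the arrival times depend on the algorithm's random service times. This is why it does not contradict the oblivious bound $e/(e-1)$ of Theorem~\ref{thm:random}.
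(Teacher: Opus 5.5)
Your proposal is correct and follows essentially the same route as the paper: the feedback adversary of Theorem~\ref{thm:detlower}, the pathwise bounds $\ALG\ge m+X$ and $\OPT\le\min\{m,2+X\}$, the pathwise form of \eqref{eq:detgap} followed by taking expectations, and the deterministic guarantee of exact-oracle \Threshold\ applied to each realized input. The one detail the paper states and you leave implicit is that $\OPT\le m$ holds pathwise, so $\E[\OPT]$ is finite and subtracting expectations is legitimate even when $\E[\ALG]=\infty$.
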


\begin{proof}
Use the adversary from Theorem~\ref{thm:detlower}, releasing the next request $\varepsilon$ after the observed service of the current one.
It need not see the random seed. 
For every realization of that seed, $\ALG\ge m+X$ and $\OPT\le\min\{m,2+X\}$ hold pathwise. 
Thus \eqref{eq:detgap} holds pathwise and remains true after taking expectations; in this normalization $\E[\OPT]\le m$, even if the online expected cost is infinite. 
Letting $m$ grow excludes every $R<2$.
If the algorithm does not serve in finite time with probability one, it already fails a finite expected guarantee. 
The deterministic upper bound holds for every realized finite input and therefore supplies the matching expected upper bound as well.
\end{proof}

%% file: sections_subadditive/applications.tex
\section{Covering Applications}
\label{sec:applications}

\subsection{Weighted vertex cover}
\label{sec:vc}

Let $G=(V,E)$ denote a finite loopless undirected graph with vertex costs $c_v>0$. 
A request type is an edge. 
A service purchases $S\subseteq V$ at cost $\sum_{v\in S}c_v$ and serves every pending request with an endpoint in $S$. 
Vertices are repurchased when needed; a service incurs one maximum-delay charge for all requests it serves, not one charge per purchased vertex. Write
\[
 \tau(A)=\min\left\{\sum_{v\in S}c_v: S\cap e\ne\varnothing\text{ for every }e\in A\right\}.
\]
This is $\kappa$ for the graph service model. 
It is normalized and monotone. 
The union of a cover of $A$ and a cover of $B$ covers $A\cup B$ at cost at most their sum, so $\tau$ is subadditive.

\paragraph{Exact bipartite oracle.}
Suppose $G=(L\cup R,E)$ is bipartite. 
For a queried edge set $A$, form a network with arcs $s\to u$ of capacity $c_u$ for $u\in L$, arcs $v\to t$ of capacity $c_v$ for $v\in R$, and arcs $u\to v$ of capacity $M>\sum_{w\in V}c_w$ for $uv\in A$. 
There is a cut of capacity less than $M$. 
Hence a minimum cut, with source side $X$, cannot leave an $A$-edge from $L\cap X$ to $R\setminus X$. The set
\[
 (L\setminus X)\cup(R\cap X)
\]
is therefore a cover whose cost equals that cut's capacity. 
Conversely, every cover gives a cut of the same cost by placing its left vertices on the sink side and its right vertices on the source side. 
Thus min-cut computes $\tau(A)$ and an attaining cover~\cite{schrijver2003}.

\begin{corollary}
\label{cor:bipartite}
Weighted bipartite vertex cover with maximum delay has polynomial-time exact batch implementations with strict ratios $2$ deterministically and $e/(e-1)$ randomly against an oblivious adversary. 
These ratios are optimal even on a fixed graph containing just one edge. 
Its offline optimum is polynomial-time computable.
\end{corollary}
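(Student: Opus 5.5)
The corollary combines four earlier ingredients: the exact bipartite oracle, the two oracle theorems, the lower bounds, and the offline dynamic program.

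First I will instantiate the oracle contract of Definition~\ref{def:oracle}. Take $g=\tau$ and $\rho=1$, with $S_g(A)$ the cover $(L\setminus X)\cup(R\cap X)$ read off a minimum cut computed by a fixed deterministic max-flow routine. Ties are broken by a fixed rule, for example the canonical minimal source side $X$ of a minimum cut. Then $c(S_g(A))=\tau(A)=\kappa(A)$, so~\eqref{eq:oracle} holds with equality. The other requirements are already recorded in Section~\ref{sec:vc}: $\tau(\varnothing)=0$, $\tau$ is monotone and subadditive, and $\tau(A)>0$ for nonempty $A$ because all $c_v>0$. The network has $|V|+2$ nodes and $|A|+|V|$ arcs with rational capacities, including the big-$M$ arcs. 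So each oracle call runs in polynomial time in the explicit rational input.

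Second, I will apply Theorem~\ref{thm:det} to get strict ratio $\rho+1=2$ for \Threshold. I will apply Theorem~\ref{thm:random} to get $R_1=1/(1-e^{-1})=e/(e-1)$ for \Height\ against an oblivious adversary. For polynomial time I will invoke the counting in Sections~\ref{sec:det} and~\ref{sec:execution}: at most $n$ value queries deterministically, at most $2n-1$ oracle calls almost surely in the randomized case, and polynomial rational event arithmetic otherwise. The randomized sampling is in the real-arithmetic model, as the table caption stipulates.

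Third, for optimality I will use a graph with one edge $uv$ and unit costs, whose single type has minimum cost $1>0$. Theorems~\ref{thm:detlower} and~\ref{thm:randomlower} apply to that type, as in Corollary~\ref{cor:optimal}. Any fixed bipartite graph with an edge has such a positive-cost type, since edge costs are $\min$ of positive endpoint costs.

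Finally, the offline claim follows from Theorem~\ref{thm:offline}. Each of the $O(n^2)$ values $\kappa(B_{j:k})=\tau(\supp(B_{j:k}))$ is one min-cut, and the attaining covers reconstruct the schedule. This is exactly the bipartite part of Proposition~\ref{prop:offlinecomplexity}. The only step needing care is the determinism and tie-breaking of $S_g$, together with checking that the big-$M$ argument gives a cover of cost exactly the cut value. Both are handled by the fixed canonical min-cut choice and the argument already given in the text, so I expect no real obstacle.
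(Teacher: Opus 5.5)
Your proposal is correct and follows the paper's own route. You instantiate the oracle contract with the min-cut value $\tau$ and its attaining cover at $\rho=1$, then invoke Theorems~\ref{thm:det} and~\ref{thm:random}, the single-type lower bounds as in Corollary~\ref{cor:optimal}, and the interval dynamic program of Theorem~\ref{thm:offline} with one min-cut per batch; the explicit tie-breaking rule and the oracle-call counts merely spell out details the paper leaves implicit.
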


The unrestricted exact-oracle bounds hold on general graphs as well; bipartiteness is used for computation, not for the timing analysis.

\paragraph{Fractional oracle on general graphs.}
Define
\begin{equation}
 \tau_f(A)=\min\left\{\sum_{v\in V}c_vx_v:
       x_u+x_v\ge1\ (uv\in A),\ x\ge0\right\}.
 \label{eq:vclp}
\end{equation}
It is a normalized monotone lower bound on $\tau$. 
If $x$ and $z$ are optimal for $A$ and $B$, their coordinatewise maximum is feasible for $A\cup B$ and costs at most their summed costs. Therefore
\[
 \tau_f(A \cup B) \le \tau_f(A) + \tau_f(B).
\]
Given an optimal $x$, let $S_f(A)=\{v:x_v\ge1/2\}$. Each edge has a selected endpoint, and
\[
 c(S_f(A)) \le 2\sum_v c_vx_v=2 \, \tau_f(A).
\]
Thus $g=\tau_f$ and $S_g=S_f$ satisfy the full oracle contract with $\rho=2$~\cite{williamson2011}.

\begin{corollary}
\label{cor:general}
Weighted vertex cover with maximum delay on general graphs has polynomial-time batch implementations with strict competitive ratios
\[
 3 \quad\text{and}\quad \frac{\sqrt e}{\sqrt e-1}
\]
deterministically and against an oblivious adversary, respectively.
The same batch routine gives an offline $2$-approximation.
\end{corollary}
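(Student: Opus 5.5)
The corollary follows by instantiating the oracle framework with $g=\tau_f$, $S_g=S_f$, $\rho=2$. The discussion just before the statement already checks every clause of Definition~\ref{def:oracle} except strict positivity on nonempty batches. I verify that clause first. If $A\ne\varnothing$, pick $uv\in A$; any feasible $x$ has $x_u+x_v\ge1$, so $\sum_w c_wx_w\ge\min\{c_u,c_v\}>0$ because all vertex costs are positive. Hence $0<\tau_f(A)\le\tau(A)$. The upper bound $\tau_f\le\tau$ holds because the indicator vector of any cover is LP-feasible. The remaining facts were shown above: monotonicity holds since dropping constraints cannot raise a minimum, subadditivity follows from the coordinatewise maximum, and $c(S_f(A))\le2\tau_f(A)$. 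With these, Theorem~\ref{thm:det} gives ratio $\rho+1=3$. Theorem~\ref{thm:random}, which needs subadditivity of $g$, gives
\[
\Rrho=\frac{1}{1-e^{-1/2}}=\frac{\sqrt e}{\sqrt e-1}.
\]
Theorem~\ref{thm:random} also requires subadditivity of the true cost $\kappa=\tau$, which holds by the union-of-covers argument.

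Next comes polynomiality. The LP~\eqref{eq:vclp} has $|V|$ variables and at most $|E|$ constraints with rational data, so an optimal vertex solution is computable in polynomial time, for example by the ellipsoid or interior-point method. I fix a deterministic tie-breaking rule, such as a lexicographically first optimal vertex, so that $S_f$ is a deterministic function as the definition requires. Section~\ref{sec:execution} shows both algorithms use $O(n)$ oracle calls plus polynomial bookkeeping. The randomized algorithm's exactness is stated in the real-arithmetic sampling model described there.

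For the offline claim, I apply Proposition~\ref{prop:offlineapprox} with $\widehat c(B)=c(S_f(B))$. This satisfies $\tau(B)\le\widehat c(B)\le2\tau_f(B)\le2\tau(B)$, so the modified dynamic program gives a $2$-approximation with $O(n^2)$ LP solves.

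The only step needing care is the determinism and tie-breaking of the LP realization. Positivity and the oracle contract are routine. The main subtlety is just confirming that the subadditivity used in Lemma~\ref{lem:batchcertificate} is that of $g=\tau_f$ itself, which the coordinatewise-maximum argument supplies.
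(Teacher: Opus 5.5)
Your proposal is correct and follows the paper's own route: check that $g=\tau_f$, $S_g=S_f$ meet the oracle contract of Definition~\ref{def:oracle} with $\rho=2$, apply Theorems~\ref{thm:det} and~\ref{thm:random} to get $3$ and $1/(1-e^{-1/2})=\sqrt e/(\sqrt e-1)$, and apply Proposition~\ref{prop:offlineapprox} with the factor-two rounding for the offline $2$-approximation. Your explicit checks of strict positivity (from $c_v>0$) and of a fixed LP tie-breaking rule fill in details that the paper leaves implicit.
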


These statements use the execution convention of
Section~\ref{sec:execution}. Appendix~\ref{sec:tightvc} gives odd-cycle
inputs on which the specific fractional algorithms approach their
displayed competitive upper bounds.

\begin{proposition}[Static approximation inherited from online algorithms]
\label{prop:statictransfer}
A strictly $R$-competitive online algorithm for vertex cover with maximum delay yields an $R$-approximation for static weighted vertex cover by running it on an all-at-zero input and taking the union of purchased vertices. 
The same statement holds in expectation for randomized
algorithms. 
It is a polynomial-time reduction when the online execution and its service outputs can be simulated in polynomial time.
\end{proposition}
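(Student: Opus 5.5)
Given a static weighted vertex cover instance on $G=(V,E)$, I will build the online input in which every edge $e\in E$ is requested once at time $0$. The argument compares the online cost with the static optimum $\tau(E)$ through the offline optimum of this input; the union of purchased vertices then gives the approximate cover.

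\textbf{Step 1: the offline optimum equals the static optimum.} I would reuse the reduction in Proposition~\ref{prop:offlinecomplexity}. Serving a minimum cover at time $0$ has zero delay, so $\OPT\le\tau(E)$. Conversely, every feasible schedule serves all edges. Hence the union of its purchased vertex sets is a vertex cover, and its weight is at most the total purchase cost. This gives $\tau(E)\le\OPT$, so $\OPT=\tau(E)$.

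\textbf{Step 2: extract a cover from the online run.} Let $S=\bigcup_\ell S_\ell$ be the union of the vertex sets the online algorithm purchases. Every request is eventually served. A service covering edge $e$ buys an endpoint of $e$, so $S$ is a vertex cover of $E$. Vertices may be repurchased, and delay charges are nonnegative, so
\[
 c(S)\le\sum_\ell c(S_\ell)\le\ALG(\sigma)\le R\,\OPT(\sigma)=R\,\tau(E).
\]
For randomized algorithms, the all-at-zero input is fixed independently of the seed, so the oblivious guarantee applies. Taking expectations of the same pathwise chain gives $\E[c(S)]\le\E[\ALG]\le R\,\tau(E)$.

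\textbf{Step 3: running time.} Polynomiality follows from the stated hypothesis: the algorithm's execution on this finite input, together with the services it outputs, can be simulated in polynomial time. The input has a single arrival epoch and $|E|$ occurrences, which is polynomial. Forming the union $S$ and computing its cost are trivial.

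The only subtle point is that the online algorithm must eventually serve everything. A strictly $R$-competitive algorithm must do so, since otherwise its cost would not be finite. No other obstacle is expected.
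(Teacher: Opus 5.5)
Your proposal is correct and follows the paper's proof step for step. You build the all-at-zero input and identify $\OPT$ with the static cover optimum as in Proposition~\ref{prop:offlinecomplexity}. You then bound the weight of the purchased union by the total online cost, apply the strict guarantee (or its oblivious-adversary expectation), and invoke the simulation hypothesis for the polynomial-time claim.
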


\begin{proof}
Request every edge at time zero. 
As in Proposition~\ref{prop:offlinecomplexity}, the online instance's offline optimum is the static cover optimum. 
The purchased union covers every edge and has cost no greater than total online cost. 
Apply the competitive guarantee, or its expectation. 
The additional simulation qualification is needed to infer a computational approximation theorem from an abstract continuous-time algorithm.
\end{proof}

This reduction gives a way to transfer static hardness assumptions, but does not prove that the factors in Corollary~\ref{cor:general} are best possible among polynomial-time online algorithms.

\subsection{Rank-bounded hypergraph vertex cover}

Let vertices have positive costs $c_v$, and let a request type be a nonempty hyperedge $e\subseteq V$ with
$|e|\le r$, where $r\ge1$ is fixed for the instance. 
A service purchases vertices and serves all pending hyperedges it intersects, paying one joint maximum delay. 
The integral joint cost $\tau_r$ is again normalized, monotone, and subadditive. Consider
\begin{equation}
 g_r(A)=\min\left\{\sum_v c_vx_v: \sum_{v\in e}x_v\ge1\ (e\in A),\ x\ge0\right\}.
 \label{eq:hyperlp}
\end{equation}
The coordinatewise-maximum argument proves subadditivity, and monotonicity is immediate from the constraints. 
Select all vertices with $x_v\ge1/r$ in an optimum solution. 
Every hyperedge is hit: otherwise its at most $r$ coordinates would sum to less than one. Its cost is at most $r g_r(A)$.

\begin{corollary}
\label{cor:hyper}
Rank-$r$ hypergraph vertex cover with maximum delay has strict ratios $r+1$ deterministically and $1/(1-e^{-1/r})$ randomly, using polynomial-time LP-based batch routines. 
Its offline problem has an $r$-approximation using the same routine. With exact integral batch optimization, the ratios improve to $2$ and $e/(e-1)$, respectively.
\end{corollary}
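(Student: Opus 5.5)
The plan is to instantiate the general oracle framework with $g=g_r$ and $S_g$ the threshold rounding, check Definition~\ref{def:oracle} with $\rho=r$, and then invoke Theorems~\ref{thm:det} and~\ref{thm:random}, Proposition~\ref{prop:offlineapprox}, and Corollary~\ref{cor:optimal}. First I verify the service model. Every type is coverable: a nonempty hyperedge is hit by any of its vertices. The joint cost $\tau_r$ is attained, since $V$ is finite and there are finitely many vertex subsets. It is normalized, and it is subadditive because the union of a cover of $A$ and a cover of $B$ covers $A\cup B$ at cost at most the sum. Positive vertex costs make $\tau_r(A)>0$ for nonempty $A$, so no zero-cost types arise.

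Next come the oracle inequalities
\[
 0<g_r(A)\le\tau_r(A)\le c(S_g(A))\le r\,g_r(A).
\]
For the lower bound, the indicator vector of an optimal integral cover is feasible in~\eqref{eq:hyperlp}, so $g_r\le\tau_r$. For positivity, any feasible $x$ for a nonempty $A$ puts total weight at least one on some hyperedge $e$, so $\sum_v c_vx_v\ge\min_v c_v>0$. The middle inequality holds because $S_g(A)=\{v:x_v\ge1/r\}$ is a cover, so its cost is at least $\tau_r(A)$. Every hyperedge is hit, since $|e|\le r$ and $\sum_{v\in e}x_v\ge1$ force some coordinate to be at least $1/r$. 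The cost bound is $c(S_g(A))\le\sum_{v\in S_g(A)}c_v\,r x_v\le r\,g_r(A)$.

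Monotonicity holds because enlarging $A$ only adds constraints. Subadditivity uses the coordinatewise maximum of optimal solutions for $A$ and $B$: it is feasible for $A\cup B$, and its cost is at most the sum of the two costs since $\max\{x_v,z_v\}\le x_v+z_v$ for nonnegative entries. To make $g_r$ and $S_g$ deterministic functions, I fix a canonical optimal LP solution, for instance the lexicographically smallest optimal vertex. A polynomial-time LP algorithm followed by a deterministic tie-breaking rule provides this, and the fixed rule is what Definition~\ref{def:oracle} requires.

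The conclusions then follow. Theorem~\ref{thm:det} gives the ratio $r+1$, and Theorem~\ref{thm:random} gives $R_r=1/(1-e^{-1/r})$. For the offline claim I apply Proposition~\ref{prop:offlineapprox} with $\widehat c=c(S_g(\cdot))$, which satisfies $\tau_r\le\widehat c\le r\tau_r$ because $g_r\le\tau_r$. For exact integral optimization, $g=\tau_r$ with $\rho=1$ gives $2$ and $e/(e-1)$ through Corollary~\ref{cor:optimal}. The main point of care is polynomiality. The LP has $|V|$ variables and $|A|$ constraints, so it is polynomial in the explicit input. The remaining overhead is controlled by the execution accounting of Section~\ref{sec:execution}, which guarantees $O(n)$ oracle calls.
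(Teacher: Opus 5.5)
Your proposal is correct and matches the paper's route. You check that the LP value $g_r$ with threshold-$1/r$ rounding satisfies Definition~\ref{def:oracle} with $\rho=r$, then invoke Theorems~\ref{thm:det} and~\ref{thm:random}, Proposition~\ref{prop:offlineapprox}, and the exact-oracle case $g=\tau_r$, $\rho=1$ (you even spell out positivity and the bound $\widehat c\le r\tau_r$, which the paper leaves implicit).

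One small remark: $g_r$ is already uniquely determined as the LP optimum value, so your tie-breaking rule is needed only to make the realization $S_g$ deterministic.
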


For each fixed $r$, cyclic hypergraphs make the fractional implementations asymptotically attain the first two bounds (Appendix~\ref{sec:tighthyper}). 
These examples do not assert computational optimality for the hypergraph problem.

\subsection{Family-service weighted Set Cover}
\label{sec:family}

Let $U$ denote a universe of $m \ge 1$ types and $\mathcal F$ a finite explicit family of sets with positive costs $c_S$. 
Every element is coverable.
A service may purchase a subfamily $\mathcal H\subseteq\mathcal F$ at cost $\sum_{S\in\mathcal H}c_S$. 
It serves all pending elements in $\bigcup_{S\in\mathcal H}S$ and incurs \emph{one} maximum waiting time over those occurrences. 
The minimum joint cost is the ordinary weighted set-cover value for the requested subset. 
This is subadditive because two purchased families can be combined into one service.

Use the fractional value
\begin{equation}
 F(A)=\min\left\{\sum_{S\in\mathcal F}c_Sx_S: \sum_{S\ni u}x_S\ge1\ (u\in A),\ x\ge0\right\}.
 \label{eq:sclp}
\end{equation}
It is normalized, positive on nonempty batches, monotone, subadditive by coordinatewise maximum, and no greater than the integral joint cost.
For realization, run weighted greedy set cover on the distinct types of $A$: repeatedly choose a set of minimum cost per newly covered type, with fixed tie-breaking. 
The following standard analysis makes explicit that the guarantee is relative to $F(A)$, not merely to the integral optimum~\cite{chvatal1979,williamson2011}.

\Needspace{6\baselineskip}
\begin{lemma}[Greedy realization relative to the LP]
\label{lem:greedy}
The greedy family has cost at most $H_m F(A)$.
\end{lemma}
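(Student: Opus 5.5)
The plan is the standard dual-fitting analysis of greedy set cover (Chv\'atal), carried out against the LP~\eqref{eq:sclp} rather than the integral optimum. Let $A'=\supp(A)$ be the distinct types, with $|A'|=k\le m$. Greedy picks sets $S_1,S_2,\ldots$. When $S_t$ is chosen, each type $u$ it newly covers receives the price
\[
 \pi_u=\frac{c_{S_t}}{|S_t\cap R_t|},
\]
where $R_t$ is the set of still-uncovered types of $A'$. Prices of distinct types then sum exactly to the greedy cost, so it suffices to show $\sum_{u\in A'}\pi_u\le H_k F(A)\le H_m F(A)$. Greedy terminates because every element is coverable.

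First step: bound prices set by set. Order the types of $A'$ in the order greedy covers them, breaking ties within one greedy step arbitrarily. Fix any $S\in\mathcal F$, and list the types of $S\cap A'$ in covering order as $u_1,\ldots,u_s$. Just before $u_i$ is covered, at least $s-i+1$ elements of $S$ are still uncovered. Since greedy chooses minimum cost per newly covered type, we get $\pi_{u_i}\le c_S/(s-i+1)$. Summing over $i$ gives
\[
 \sum_{u\in S\cap A'}\pi_u\le H_s c_S\le H_k c_S .
\]

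Second step: pass to the LP. Let $x$ be an optimal solution of~\eqref{eq:sclp}. Using the covering constraints $\sum_{S\ni u}x_S\ge1$ and $\pi\ge0$,
\[
 \sum_{u\in A'}\pi_u\le\sum_{u\in A'}\pi_u\sum_{S\ni u}x_S
 =\sum_S x_S\sum_{u\in S\cap A'}\pi_u\le H_k\sum_S c_Sx_S=H_kF(A).
\]
Equivalently, $\pi/H_k$ is a feasible dual solution, and weak duality applies. Finally, $H_k\le H_m$.

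The only delicate point is the first step, specifically the claim that at least $s-i+1$ elements of $S$ remain uncovered at the step covering $u_i$. When several elements of $S$ are covered in the same greedy step, they all receive the same price. The ordering within that step must therefore be chosen so that the bound still holds for each of them. Since every such element sees the same count of uncovered elements of $S$ at that step, which is at least the count for the earliest of them, the bound holds with any within-step order. Repeated occurrences are irrelevant because both costs depend only on $\supp(A)$.
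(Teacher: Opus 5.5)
Your proposal is correct and follows the paper's proof essentially step for step. You use the same per-element prices, the same harmonic bound for each fixed set $S$ with within-step ties handled the same way, and the same conclusion via dual feasibility of the scaled prices and weak duality, which you simply write out explicitly against an optimal primal $x$.
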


\begin{proof}
When greedy selects a set at cost per newly covered element $p$, assign price $p_u = p$ to each type it newly covers. 
The sum of these prices is the greedy purchase cost. Fix any set $S\in\mathcal F$. 
When $k$ of its requested types remain uncovered, greedy's chosen price is at most $c_S/k$, since $S$ itself is available. 
Order the types of $A\cap S$ by when they are covered, breaking simultaneous ties arbitrarily. 
Summing the resulting bounds gives
\[
 \sum_{u\in A\cap S}p_u\le H_{|A\cap S|}c_S\le H_m c_S.
\]
Within a tied group, the bound with the larger pre-selection value of $k$ is only stronger than the individual harmonic bounds. 
Thus $(p_u/H_m)_{u\in A}$ is a feasible dual solution to~\eqref{eq:sclp}.
Weak duality implies $\sum_u p_u\le H_m F(A)$.
\end{proof}

\begin{corollary}
\label{cor:family}
Family-service weighted Set Cover with maximum delay has polynomial-time batch implementations with strict competitive ratios
\[
 H_m+1 \quad\text{and}\quad \frac1{1-e^{-1/H_m}},
\]
and an offline $H_m$-approximation. 
With exact batch optimization its optimal ratios are $2$ and $e/(e-1)$.
\end{corollary}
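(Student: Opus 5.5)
The corollary follows by instantiating the oracle framework with $g=F$ and $S_g$ the greedy family. The ratios then come from Theorems~\ref{thm:det} and~\ref{thm:random} with $\rho=H_m$, and the offline claim from Proposition~\ref{prop:offlineapprox}. The exact-optimization statement is Corollary~\ref{cor:optimal} applied to this service system.

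\textbf{Step 1: the oracle contract of Definition~\ref{def:oracle}.}
Normalization holds because $F(\varnothing)=0$, since there are no constraints.
Positivity on nonempty $A$ holds because every feasible $x$ must cover some requested type $u$. Hence $F(A)\ge\min_{S\ni u}c_S>0$, using that all set costs are positive.
Monotonicity holds since enlarging $A$ only adds constraints to~\eqref{eq:sclp}.
For subadditivity, take optimal $x$ for $A$ and $z$ for $B$. Their coordinatewise maximum is feasible for $A\cup B$, and its cost is at most $\sum_S c_S(x_S+z_S)$.
The lower bound $F(A)\le\kappa(A)$ holds because the indicator vector of an optimal subfamily is LP-feasible.
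The chain $\kappa(A)\le c(S_g(A))$ holds because the greedy family covers $A$. This uses the standing assumption that every element is coverable, so greedy terminates.
Finally, $c(S_g(A))\le H_mF(A)$ is exactly Lemma~\ref{lem:greedy}.
The greedy routine is deterministic by the fixed tie-breaking rule, as Definition~\ref{def:oracle} requires.

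\textbf{Step 2: the service-model assumptions on $\kappa$.}
The true joint cost $\kappa$ is the weighted set-cover value. It is attained, since the family $\mathcal F$ is finite. It is subadditive by combining two subfamilies into one service, as noted in the text.
Applying Theorem~\ref{thm:det} gives the deterministic ratio $H_m+1$. Applying Theorem~\ref{thm:random} gives the randomized ratio $1/(1-e^{-1/H_m})$.
For the offline claim, Proposition~\ref{prop:offlineapprox} needs $\kappa\le\widehat c\le H_m\kappa$ with $\widehat c$ the greedy cost. This follows from $H_mF\le H_m\kappa$.

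\textbf{Step 3: polynomial time.}
The LP~\eqref{eq:sclp} has polynomial size in the explicit family and universe, so it is solvable in polynomial time for rational data.
Greedy runs in at most $m$ rounds.
Each algorithm makes $O(n)$ oracle calls, by the counts in Sections~\ref{sec:det} and~\ref{sec:execution}. The dynamic program of Theorem~\ref{thm:offline} makes $O(n^2)$ calls.
For exact batch optimization, any type $u$ has positive cost $\kappa(\{u\})=\min_{S\ni u}c_S$, so Corollary~\ref{cor:optimal} applies directly and gives the ratios $2$ and $e/(e-1)$.

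The only mildly delicate point is the randomized ratio with $\rho=H_m$. Theorem~\ref{thm:random} requires $g$ itself to be subadditive, not just $\kappa$. That is why $F$ is used as the oracle value rather than the greedy output value, which need not be monotone or subadditive. Since subadditivity of $F$ is the coordinatewise-maximum argument from Step 1, I expect no real obstacle.
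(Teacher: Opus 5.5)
Your proposal is correct and follows the paper's own route. You instantiate the oracle framework with $g=F$, the greedy realization, and $\rho=H_m$, using Lemma~\ref{lem:greedy}, Theorems~\ref{thm:det} and~\ref{thm:random}, Proposition~\ref{prop:offlineapprox}, and Corollary~\ref{cor:optimal}, and you make the same point the paper emphasizes: the height speed must use the subadditive value $F$ rather than the greedy output value.
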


Apply the oracle theorems with $g=F$ and $\rho=H_m$. Importantly, the height speed uses the fractional optimum, not the value returned by greedy. 
The greedy value alone is not being assumed monotone or subadditive. 
Appendix~\ref{sec:tightsc} gives a single-batch example approaching both implementation bounds for each fixed $m \ge 2$.

\section{Relation to Multi-Level Aggregation}
\label{sec:mla}

The companion manuscript~\cite{lu2026mla}, Theorem~6.1, gives the exact competitive constants for realizable submodular service systems. 
We relate its tree model and proof structure to the present framework.

\subsection{Rooted-tree joint costs fit the framework}

For a rooted tree, let $w_v\ge0$ denote the cost of vertex $v$ and let $T_v$ denote the descendant subtree rooted at $v$. 
A rooted connected service reaching a set $A$ of request locations must contain the union of their root paths. Its minimum cost is
\begin{equation}
 \kappa_T(A)=\sum_{v}w_v\ind\{A\cap T_v\ne\varnothing\}.
 \label{eq:treecost}
\end{equation}
The same formula applies to edge costs by indexing descendant subtrees by their incoming edges. Each summand is a nonnegative weighted coverage function, hence is normalized, monotone, and submodular.

Every normalized nonnegative submodular function $f$ is subadditive, since
\[
 f(A\cup B)\le f(A)+f(B)-f(A\cap B)\le f(A)+f(B).
\]
Root-path unions are executable whenever the corresponding requests are pending. 
Under the same instantaneous, nonpersistent, joint-maximum service semantics, MLA therefore satisfies our service contract. 
Free root or zero-path types are removed as in Section~\ref{sec:model}.
The exact-oracle algorithms give the same ratios $2$ and $e/(e-1)$, independently of tree depth. 
This recovers the competitive guarantees, not the particular DP-Envelope schedule.

\subsection{What the subadditive extension changes}

The companion analysis uses diminishing returns and an interval-cost quadrangle inequality to obtain causal DP-Envelope deadlines and nested threshold-dependent partitions. 
Our proof instead compares layer sets by inclusion and union: Lemma~\ref{lem:batchcertificate} uses
\[
 g(X\cup B)\le g(X)+g(B)
\]
to certify virtual active time as a lower bound on $\OPT$. 
The deterministic algorithm uses direct interval charging. 
Thus the competitive guarantees carry over without requiring the envelope or partition-nesting properties of the companion algorithm.

\begin{proposition}[Strict separation of cost assumptions]
\label{prop:nonsubmod}
The minimum batch cost of bipartite vertex cover need not be submodular.
Consequently, within the common finite-type static-service model, monotone subadditivity is a strictly weaker cost assumption than submodularity.
\end{proposition}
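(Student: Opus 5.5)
The plan is to exhibit an explicit witness and then combine it with the general implication already recorded in this section. The witness is the unit-cost path $v_0v_1v_2v_3$ with edges $e_1=v_0v_1$, $e_2=v_1v_2$, $e_3=v_2v_3$. It is bipartite with parts $\{v_0,v_2\}$ and $\{v_1,v_3\}$. Take $A=\{e_1,e_2\}$ and $B=\{e_2,e_3\}$, so that $A\cap B=\{e_2\}$ and $A\cup B=\{e_1,e_2,e_3\}$.

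First I will compute the four values of $\tau$. The single vertex $v_1$ covers $A$, and $v_2$ covers $B$. Either of these covers $A\cap B$. Since every nonempty edge set needs at least one vertex, each of these three values equals $1$. For $A\cup B$, the edges $e_1$ and $e_3$ are vertex-disjoint, so any cover must contain at least two vertices. The set $\{v_1,v_2\}$ covers all three edges, so $\tau(A\cup B)=2$. This gives
\[
 \tau(A)+\tau(B)=2<3=\tau(A\cup B)+\tau(A\cap B),
\]
which violates the submodular inequality. Hence the bipartite vertex-cover batch cost is not submodular. This is the first claim, and the computation matches Figure~\ref{fig:separation}.

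For the second claim I need both directions of strictness within the common static-service model. The display preceding this proposition shows that every normalized nonnegative submodular function is subadditive. The minimum batch cost $\kappa$ is automatically normalized and nonnegative, and it is monotone by definition, so every submodular $\kappa$ in the model satisfies the subadditive assumption~\eqref{eq:kappasub}. In the other direction, the vertex-cover system on this path is a legitimate instance of the model. Its types are the three edges and its actions are vertex subsets with additive costs. Every type is coverable, minima are attained, and the system is static. Section~\ref{sec:vc} already verified that $\tau$ is monotone and subadditive. This instance is therefore subadditive but not submodular, so the inclusion is strict.

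I expect no real obstacle. The only point needing care is the lower bound $\tau(A\cup B)\ge2$, which relies on $e_1$ and $e_3$ sharing no endpoint. A second small point is to state clearly that the comparison is between assumptions on $\kappa$ for systems in the same finite-type static-service model.
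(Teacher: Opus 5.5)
Your proposal is correct and follows essentially the same route as the paper. Both use the unit-cost three-edge path with $A=\{e_1,e_2\}$ and $B=\{e_2,e_3\}$, which gives $\tau(A)+\tau(B)=2<3=\tau(A\cup B)+\tau(A\cap B)$, and both draw subadditivity and realizability from the vertex-cover service model of Section~\ref{sec:vc}. Your explicit use of the preceding display (normalized nonnegative submodular functions are subadditive) for the other half of the strictness claim makes that direction visible, whereas the paper's short proof leaves it implicit.
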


\begin{proof}
The three-edge path in Section~\ref{sec:intro} and
Figure~\ref{fig:separation} supplies the violation of submodularity.
Subadditivity and realizability follow from the vertex-cover service model in Section~\ref{sec:vc}.
\end{proof}

On this bipartite path, fractional and integral cover values agree, so the separation also holds for the fractional cost function. 
The approximation-oracle bounds provide a further distinction: they allow the subadditive lower-bound value and the cost of its integral realization to differ.

\section{Why Joint Service Semantics Matter}
\label{sec:boundary}

The cost in~\eqref{eq:servicecost} attaches one maximum delay to a whole action. 
This permits a union of vertex purchases, hypergraph hitting vertices, or set-cover sets to be treated as one service. 
The offline normal form needs such a batch to be realizable for its scalar cost; the online algorithms need the same property when they clear a queue.

An atomic set-purchase model has a different accounting rule. 
If a subfamily $\mathcal H$ is purchased, each $S\in\mathcal H$ has its own served batch $B_S$ and its own delay charge. 
Its cost has the form
\[
 \sum_{S\in\mathcal H} \left(c_S+\max_{i\in B_S}(t-a_i)\right),
\]
with empty batches omitted. This need not equal or be bounded by the family-service cost with just one maximum.

For example, take two request types $u,v$ and only the two singleton sets, each costing one. 
Suppose both requests arrive at zero and both are served at time $t>0$. 
A family service costs $2+t$, but two atomic purchases cost $2+2t$. 
If a single atomic purchase is the allowed action, there is not even an action covering $\{u,v\}$, although each singleton is coverable. 
Treating the two purchases as an action of spatial cost two would not repair the mismatch in the delay term. 
This is a direct failure of the service contract, not evidence that the static set-cover optimum ceases to be subadditive.

Accordingly, Corollary~\ref{cor:family} concerns the family-service problem only. 
It does not solve atomic rebuy Set Cover, show that its competitive ratios are the same, or establish an impossibility for algorithms tailored to that different objective. 
Time-dependent feasibility, persistent resources, action durations, and non-unit or nonlinear waiting penalties likewise require separate arguments.

%% file: sections_subadditive/tightness.tex
\section{Tight Examples for the Approximate Implementations}
\label{sec:tight}

The optimality lower bounds in Section~\ref{sec:lower} apply to all algorithms. In contrast, this appendix concerns the stated realizable oracles and their fixed rounding or greedy actions. 
It demonstrates that their upper-bound analyses cannot in general be improved while leaving these implementations unchanged. 
These examples are not unconditional or complexity-theoretic lower bounds for arbitrary efficient algorithms.

\Needspace{10\baselineskip}
\subsection{A single-batch identity}

\begin{lemma}
\label{lem:onebatch}
Suppose all requests of a nonempty batch $A$ arrive at zero and no later request arrives. 
Set $C=g(A)$, $K=\kappa(A)$, and $p=c(S_g(A))$. 
The deterministic and randomized oracle algorithms have ratios, respectively,
\begin{equation}
 \frac{p+C}{K} \qquad\text{and}\qquad \frac{p+(\Rrho-\rho)C}{K}.
 \label{eq:onebatch}
\end{equation}
\end{lemma}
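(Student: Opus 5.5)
The plan is to compute both algorithms' costs directly on this input and divide by $\OPT$, which I first show equals $K$.

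\emph{Computing $\OPT$.} All requests share the single epoch $b_1=0$. Lemma~\ref{lem:normal} then allows only the partition consisting of the single batch $A$. That batch is served at time $0$ with zero span, so by~\eqref{eq:normal} we get $\OPT=\kappa(A)=K$.

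\emph{Deterministic algorithm.} \Threshold\ has pending queue $A$ with oldest arrival $0$. Since nothing else arrives, the trigger~\eqref{eq:dettrigger} fires at time $t=g(A)=C$. It executes $S_g(A)$ once and clears everything. Its cost is therefore $p+C$, and the ratio is $(p+C)/K$.

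\emph{Randomized algorithm: the virtual trajectory.} \Height\ has a single block $A$ created at height zero. No arrival or merge ever occurs, so this block rises alone at constant speed $1/C$ and completes at time $C$. A threshold $\Theta_\rho=\theta$ is crossed exactly once, at time $\theta C$. At that crossing the algorithm buys $S_g(A)$ at cost $p$, and this clears the queue by Lemma~\ref{lem:full}. The delay is $\theta C$.

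\emph{Randomized algorithm: expected cost.} Hence $\E[\ALG]=p+C\,\E[\Theta_\rho]$, and it remains to show $\E[\Theta_\rho]=\Rrho-\rho$. I will get this from the rate identity~\eqref{eq:rate}, integrated over $h\in[0,1]$. Using $\int_0^1\rho q_\rho'=\rho$ and $\int_0^1(1-q_\rho)=\E[\Theta_\rho]$, the identity gives $\rho+\E[\Theta_\rho]=\Rrho$. This yields expected cost $p+(\Rrho-\rho)C$ and the claimed ratio.

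Every step is routine. The only point to state carefully is that ties of $\Theta_\rho$ with event heights have probability zero, which was already handled in Lemma~\ref{lem:full}.
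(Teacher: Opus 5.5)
Your proposal is correct and follows the paper's proof essentially step by step. The steps match: $\OPT=K$ for the single-epoch input, the deterministic trigger fires at time $C$, and the lone virtual block crosses $\Theta_\rho$ at time $C\Theta_\rho$. The only cosmetic difference is how you get $\E[\Theta_\rho]=R_\rho-\rho$: you integrate the rate identity~\eqref{eq:rate} over $[0,1]$, whereas the paper evaluates $\int_0^1(1-q_\rho(h))\,dh$ explicitly, and both are valid.
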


\begin{proof}
Offline serves the entire batch at zero for cost $K$, and subadditivity precludes a cheaper sum of purchases. 
The deterministic algorithm waits until $C$ and pays $p+C$. 
In the virtual process there is one block with height $h=t/C$ until its completion. 
The actual service is at
$C\Theta_\rho$, of cost $p+C\Theta_\rho$. Finally,
\[
 \E[\Theta_\rho]=\int_0^1(1-q_\rho(h))\,dh
     =\frac{e^{1/\rho}}{e^{1/\rho}-1}-\rho=\Rrho-\rho.
\]
Substitution proves both identities.
\end{proof}

In particular, when $p=\rho C$, the ratios are
$(\rho+1)C/K$ and $\Rrho C/K$. 
Tightness requires both expensive realization relative to $g$ and a small gap between $g$ and the true batch optimum. 
A large integrality gap alone does not imply the desired online lower example.

\subsection{Odd cycles for the general-graph LP implementation}
\label{sec:tightvc}

Take the unit-cost odd cycle on $q=2k+1$ vertices, $k\ge1$, and request all its edges at time zero. 
The fractional optimum is $C=q/2$: summing all edge constraints proves the lower bound, and $x_v=1/2$ attains it.
Moreover, this is the unique optimal solution. 
At an optimum the summed constraints are tight, so every edge satisfies $x_u+x_v=1$. 
Alternating these equalities around an odd cycle forces every coordinate to be $1/2$.

The specified rounding rule selects all vertices, at cost $p=q$. The integral cover optimum is $K=k+1=(q+1)/2$. 
Thus Lemma~\ref{lem:onebatch} with $\rho=2$ gives exactly
\[
 \frac{3q}{q+1} \qquad\text{and}\qquad \frac{\sqrt e}{\sqrt e-1}\frac{q}{q+1}.
\]
As $q$ tends to infinity through odd integers, these approach the two bounds in Corollary~\ref{cor:general}. 
Uniqueness of the fractional optimum removes dependence on LP tie-breaking, but changing the rounding procedure can change the example's behavior.

\subsection{Cyclic hypergraphs for fixed rank}
\label{sec:tighthyper}

Fix $r\ge2$ and choose $n=kr+1$ with $k\ge1$. 
Use $n$ unit-cost vertices indexed modulo $n$, and let the hyperedges be the $n$ cyclic windows
\[
 E_i=\{i,i+1,\ldots,i+r-1\}.
\]
Request all hyperedges at zero. 
Summing the LP constraints shows $C=n/r$, attained by $x_i=1/r$. 
At any optimum all window constraints are tight. 
Subtracting consecutive window equalities yields $x_i=x_{i+r}$. Since $\gcd(n,r)=1$, all coordinates are equal, so the
fractional optimum is uniquely $1/r$ at each vertex. 
Threshold rounding therefore purchases every vertex and has $p=n=rC$.

An integral cover must meet all $n$ windows. 
Each selected vertex belongs to $r$ windows, so incidence counting gives $r|S|\ge n$. 
Conversely, select vertices $0,r,2r,\ldots,kr$. 
The cyclic gaps between selected
vertices are at most $r$, so every $r$-vertex window contains a selected vertex. Therefore
\[
 K=\lceil n/r\rceil=k+1.
\]
Lemma~\ref{lem:onebatch} gives ratios
\[
 (r+1)\frac{n}{r\lceil n/r\rceil} \qquad\text{and}\qquad \frac1{1-e^{-1/r}}\frac{n}{r\lceil n/r\rceil}.
\]
For each \emph{fixed} $r$, letting $k\to\infty$ makes these approach $r+1$ and $1/(1-e^{-1/r})$. 
Fixing $r$ is part of this asymptotic claim; we do not assert absolute convergence for arbitrary sequences in which $r$ itself grows. 
For $r=1$, a one-vertex batch already attains the exact-oracle ratios by~\eqref{eq:onebatch} with $p=C=K$.

\subsection{A greedy hard batch for family-service Set Cover}
\label{sec:tightsc}

Fix $m\ge2$ and $0<\varepsilon<1-1/H_m$. 
The universe is $\{u_1,\ldots,u_m\}$. 
Provide one universal set of cost one and the singleton $\{u_j\}$ of cost $(1-\varepsilon)/j$ for each $j$. Request all elements at zero.

The integral optimum is $K=1$. The fractional optimum is also $C=1$: if $t\in[0,1]$ is the weight on the universal set, the least feasible weight on each singleton is $1-t$, yielding objective
\[
 t+(1-t)(1-\varepsilon)H_m\ge1.
\]
Values $t>1$ cannot improve the optimum. Equality is attained at $t=1$, and the strict choice of $\varepsilon$ makes this the unique optimal universal weight.

Greedy first takes $\{u_m\}$, then $\{u_{m-1}\}$, and so on. 
Indeed, when $u_1,\ldots,u_j$ remain uncovered, the universal set has cost per new element $1/j$, while $\{u_j\}$ has strictly smaller ratio $(1-\varepsilon)/j$ and is the cheapest remaining singleton. 
Thus the greedy purchase cost is $p=(1-\varepsilon)H_m$. For the algorithm parameter $\rho=H_m$,~\eqref{eq:onebatch} becomes
\[
 1+(1-\varepsilon)H_m \qquad\text{and}\qquad R_{H_m}-\varepsilon H_m.
\]
For fixed $m$, these approach $H_m+1$ and $R_{H_m}$ as
$\varepsilon\downarrow0$. Every greedy comparison is strict, so tie breaking does not affect the construction. 
For $m=1$ greedy is exact and the single-batch identity gives the corresponding exact bounds.

\paragraph{Strict ratios versus additive constants.}
The examples above establish tight suprema for the strict competitive ratios of the specified algorithms. 
If a lower example against a fixed additive constant is desired, repeat the same batch in phases separated
by more than $g(A)$. 
Both algorithms then complete their actual and virtual service before the next phase; the shared random threshold does not change the expected cost per phase. 
Offline can serve every phase at its arrival for cost $K$ per phase. 
The online cost minus $R$ times this feasible offline cost grows linearly whenever the displayed single-phase ratio exceeds $R$. 
Hence these implementation-specific lower examples also exclude smaller ratios with a fixed additive constant.